\documentclass[11pt]{article}
\usepackage{times}
\usepackage{amsmath,amsthm,amsfonts}
\usepackage{fullpage}

\begin{document}

\title{Better short-seed quantum-proof extractors}

\author{
Avraham Ben-Aroya\thanks{The Blavatnik School of Computer Science,
Tel-Aviv University, Tel-Aviv 69978, Israel. Supported by the Adams
Fellowship Program of the Israel Academy of Sciences and Humanities,
by the Israel Science Foundation, by the Wolfson Family Charitable
Trust, and by a European Research Council (ERC) Starting Grant. }
\and Amnon Ta-Shma\thanks{The Blavatnik School of Computer Science,
Tel-Aviv University, Tel-Aviv 69978, Israel. Supported by the FP7 FET-Open project QCS} }
\date{}

\maketitle

\newcommand{\remove}[1]{}

\newcommand{\set}[1]{{\left\{ #1\right\}}}
\newcommand{\B}[0]{{\left\{0,1\right\}}}
\newcommand{\zo}{\set{0,1}}

\newcommand{\abs}[1]{\left| #1 \right|}
\newcommand{\norm}[1]{\left\| #1 \right\|}
\newcommand{\half}{\frac 1 2}
\newcommand{\minent}{{H_{\infty}}}
\newcommand{\Tr}{\mathrm{Tr}}
\newcommand{\eps}{{\epsilon}}
\newcommand{\logeps}{\log \epsilon^{-1}}
\newcommand{\eqdef}{\stackrel{\rm def}{=}}
\newcommand{\wh}[1]{\widehat{#1}}
\newcommand{\wt}[1]{\widetilde{#1}}
\newcommand{\poly}{{\rm poly}}
\newcommand{\pw}[1]{{\lfloor\!\!\lfloor#1\rfloor\!\!\rfloor}}
\newcommand{\brho}{{\bar{\rho}}}
\newcommand{\ol}[1]{\overline{#1}}

\newcommand{\ket}[1]{\left|#1\right\rangle}
\newcommand{\bra}[1]{\left\langle #1\right|}
\newcommand{\braket}[2]{\left.\left\langle #1\right|#2\right\rangle}
\newcommand{\ketbra}[2]{\ket{#1}\!\bra{#2}}
\newcommand{\tensor}{{\otimes}}
\newcommand{\tnorm}[1]{\norm{#1}_{\mathrm{tr}}}

\newcommand{\enc}[3]{{#1} \stackrel{{#3}}{\mapsto} {#2}}
\newcommand{\bn}{{\bar{n}}}
\newcommand{\Hom}{\mathrm{Hom}}

\newcommand{\cC}{\mathcal{C}}
\newcommand{\cD}{\mathcal{D}}
\newcommand{\cE}{\mathcal{E}}
\newcommand{\rE}{\mathrm{E}}
\newcommand{\cF}{\mathcal{F}}
\newcommand{\cH}{\mathcal{H}}
\newcommand{\cQ}{\mathcal{Q}}

\newcommand{\NW}{\mathrm{NW}}
\newcommand{\TR}{\mathrm{TR}}
\newcommand{\E}{\mathop{\mathbb{E}}\displaylimits}
\newcommand{\sam}{{\sim}}

\renewcommand{\qed}{\hfill{\rule{2mm}{2mm}}}
\renewenvironment{proof}[1][]{\begin{trivlist}
\item[\hspace{\labelsep}{\bf\noindent Proof#1:\/}] }{\qed\end{trivlist}}



\newenvironment{Proof}{{\bf Proof:\ }}{\hfill$\Box$\medskip}

\newtheorem{theorem}{Theorem}[section]
\newtheorem{definition}{Definition}[section]
\newtheorem{lemma}{Lemma}[section]
\newtheorem{claim}{Claim}[section]
\newtheorem{proposition}{Proposition}[section]
\newtheorem{corollary}{Corollary}[section]
\newtheorem{example}{Example}[section]
\newtheorem{fact}{Fact}[section]
\newtheorem{remark}{Remark}[section]

\begin{abstract}
We construct a strong extractor against quantum storage that works
for every min-entropy $k$, has logarithmic seed length, and outputs
$\Omega(k)$ bits, provided that the quantum adversary has at most
$\beta k$ qubits of memory, for any $\beta < \half$.  The construction works by first
condensing the source (with minimal entropy-loss) and then applying
an extractor that works well against quantum adversaries when the source is close to uniform.

We also obtain an improved construction of a strong quantum-proof
extractor in the high min-entropy regime. Specifically, we construct
an extractor that uses a logarithmic seed length and extracts
$\Omega(n)$ bits from any source over $\B^n$, provided that the
min-entropy of the source conditioned on the quantum adversary's
state is at least $(1-\beta) n$, for any $\beta < \half$.
\end{abstract}

\setcounter{page}{1}
\pagestyle{plain}

\section{Introduction}
\label{sec:introduction}

In the  \emph{privacy amplification} problem Alice and Bob share
information that is only partially secret with respect to an
eavesdropper Charlie. Their goal is to distill this information to a
shorter string that is completely secret. The problem was introduced
in~\cite{BBR88,BBCM95} for classical eavesdroppers. An interesting
variant of the problem, where the eavesdropper is allowed to keep
quantum information rather than just classical information, was
introduced by K\"{o}nig, Maurer and Renner~\cite{KMR05}. This
situation naturally occurs in analyzing the security of some quantum
key-distribution protocols~\cite{CRE04} and in bounded-storage
cryptography~\cite{KT08, KR07}.

The shared information between Alice and Bob is modeled as a
shared string $x \in \B^n$, sampled according a distribution $X$.
The information of the eavesdropper is modeled as a mixed state, $\rho(x)$,
which might correlated with $x$.

The privacy amplification problem can be solved by Alice and Bob,
but only by using a (hopefully short) random seed $y$, which can be
public. Thus, Alice and Bob look for a function  $E:\B^n \times \B^t
\to \B^m$ that acts on their shared input $x$ and the public random
string $y$, and extracts ``true randomness" for any ``allowed"
classical distribution $X$ and side information $\rho(X)$. More
formally, $E$ is an $\eps$-\emph{strong extractor for a family of
inputs $\Omega$}, if for any distribution $X$ and any quantum system
$\rho$ such that $(X;\rho) \in \Omega$, the distribution $Y \circ
E(X,Y) \circ \rho$ is $\eps$-close to $U \circ \rho$, where $U$
denotes the uniform distribution. (See Section~\ref{sec:def:ext} for
precise details.)

Clearly, no randomness can be extracted if, for every $x$, it is possible to recover $x$ from the side information
$\rho(x)$. We say the \emph{conditional
min-entropy} of $X$ with respect to $\rho(X)$ is $k$, if  an
adversary holding the state $\rho(x)$ cannot guess the string $x$
with probability higher than $2^{-k}$. Roughly
speaking, if one can extract $k$ almost uniform bits from a source
$X$ in spite of the side information $\rho(X)$, then the state $X
\circ \rho(X)$ is close to another state with conditional
min-entropy at least $k$.\footnote{Such a source is said to have
conditional \emph{smooth} min-entropy $k$.} Thus, in a very concrete
sense, the ultimate goal is finding extractors for sources with high
conditional min-entropy.\footnote{A simple argument shows an
extractor for sources with high conditional min-entropy is also an
extractor for sources with high conditional smooth min-entropy.} We
say $E$ is a \emph{quantum-proof} $(n,k,\eps)$ strong extractor if
it extracts randomness from every input $(X;\rho)$ with conditional
min-entropy at least $k$.

Not every classical extractor\footnote{We refer to extractors that extract randomness when the side information is classical as classical extractors.} is quantum-proof, as was shown by Gavinsky et
al.~\cite{GKKRW07}. On the positive side, several well-known
classical extractors are quantum-proof.
Table~\ref{table:classical-extractors} lists some of these
constructions. We remark that the best explicit classical extractors
\cite{GUV07,DW08,DKSS09} achieve significantly better parameters
than those known to be quantum-proof.

\begin{table}[t]
\begin{tabular}{|l|l|l|l|}\hline
no. of  truly&
no. of &
classical & quantum-proof \\
random bits &
output bits & & \\
\hline\hline
 $O(n)$ &
$m=k-O(1)$ & Pair-wise independence, \cite{ILL89} & \checkmark \cite{KMR05} \\
$O(n-k+\log n)$ &
$m=n$ & Fourier analysis, collision \cite{DS05} & \checkmark \cite{FS07} \\
 $\Theta(m)$ &
$m = k-O(1)$ & Almost pair-wise ind., \cite{SZ99,GW97} & \checkmark, \cite{TSSR10}\\
 $O({\log^2n \over \log (k)})$ & $k^{1-\zeta}$ &
Designs, \cite{T01} & \checkmark, \cite{DV10} \\
$O(\log n)$ & $m=\Omega(n)$ & \cite{NZ96,CRVW02} & \checkmark, This
paper, provided $k > (\half+\zeta) n$
\\
\hline
$\log n+O(1)$ &
$m=k-O(1)$ &
Lower bound  \cite{NZ96,RT00} & \checkmark \\
\hline\hline
\end{tabular}
\caption{Explicit quantum-proof $(n,k,\eps)$ strong extractors. To simplify parameters, the error $\epsilon$ is a constant. }
\label{table:classical-extractors}
\end{table}

A simpler adversarial model is the ``bounded storage model" where
the adversary may store a limited number of qubits. The only
advantage of the bounded storage model for extractors is that it
simplifies the proofs, and allows us to achieve results which
currently we cannot prove in the general model. We say $E$ is an
$(n,k,b,\eps)$ strong extractor \emph{against quantum storage} if it
extracts randomness from every pair $(X;\rho)$ for which $X$ has at
least $k$ min-entropy and for every $x$, $\rho(x)$ is a mixed state with at most $b$
qubits.

In this paper we work with a slight generalization of the bounded storage model. We say $E$ is a \emph{quantum-proof} $(n,f,k,\eps)$ strong extractor for \emph{flat distributions} if it extracts randomness from every input $(X;\rho)$ for which $X$ is a flat distribution (meaning it is uniform over its support) with exactly $f$ min-entropy and the conditional min-entropy is at least $k$. In Lemma \ref{lem:flat-to-storage} we prove the easy observation that any quantum-proof $(n,f,k,\eps)$ strong extractor for flat distributions
is also a $(n,f,f-k,\eps)$ strong extractor against quantum storage.

We show a generic reduction from the problem of constructing
quantum-proof $(n,f,k,\eps)$ strong extractors for flat
distributions to the problem of constructing quantum-proof
$((1+\alpha)f,f,k,\eps)$ strong extractors for flat distributions,
and a similar reduction for the bounded storage model. In other
words, in our model the quantum adversary may have two types of
information about the source: first, it may have some classical
knowledge about it, reflected in the fact that the input $x$ is
taken from some classical flat distribution $X$, and second, it
holds a quantum state that contains some information about the
source. The reduction shows that without loss of generality we may
assume the classical input distribution is almost uniform. The
reduction uses a purely classical object called a \emph{strong
lossless condenser} and extends work done in \cite{TUZ07} on
extractors to quantum-proof extractors. This reduction holds for any
setting of the parameters.

We then augment this with a simple construction that shows how to
obtain a quantum-proof $((1+\alpha)f,f,k=(1-\beta)f,\eps)$  strong extractor for flat distributions,
provided that $\beta < \half$. The
argument here builds on work done in \cite{NZ96} on composition of
extractors and extends it to quantum-proof extractors. Together, these two reductions give:

\begin{theorem}
\label{thm:main-flat}
For any $\beta < \half$ and $\eps \ge 2^{-k^\beta}$, there exists an explicit quantum-proof $(n,k,(1-\beta) k,\eps)$ strong extractor for flat sources $E:\B^n \times \B^t \to \B^m$ with seed length $t=O(\log n+\logeps)$ and output length $m=\Omega(k)$.
\end{theorem}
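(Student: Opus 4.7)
The plan is to combine the two ingredients announced in the introduction: the classical condenser-based reduction that extends \cite{TUZ07} to the quantum setting, and the NZ96-style composition that gives a quantum-proof extractor in the almost-uniform regime. Given a flat source $X$ on $\B^n$ of min-entropy $k$ and side information $\rho$ with conditional min-entropy at least $(1-\beta)k$, I would proceed in two stages.

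First, fix a small constant $\alpha>0$ with $\alpha + \beta < \half$, and let $C:\B^n \times \B^{t_1} \to \B^{(1+\alpha)k}$ be an explicit strong lossless condenser (such as that of \cite{GUV07}) with seed length $t_1 = O(\log n + \logeps)$ and error $\eps/2$ for min-entropy $k$. For each fixed seed value $y_1$, the map $C(\cdot,y_1)$ is injective on the support of $X$, so $(C(X,Y_1),Y_1)$ is $\eps/2$-close to a convex combination of pairs of the form (flat source of min-entropy $k$ on $\B^{(1+\alpha)k}$, seed value). Because the seed is public and $C(\cdot,y_1)$ is injective, the conditional min-entropy of $C(X,Y_1)$ given $(\rho,Y_1)$ is still at least $(1-\beta)k$; this is the quantum analogue of the classical condenser reduction and is where the quantum-proof property must actually be verified.

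Second, I would apply the quantum-proof $((1+\alpha)k,\, k,\, (1-\beta)k,\, \eps/2)$ strong extractor for flat distributions $E'$ built by NZ96-style composition. Since its input domain has $(1+\alpha)k$ bits, its seed length is $t_2 = O(\log((1+\alpha)k) + \logeps) = O(\log n + \logeps)$ and its output length is $m = \Omega(k)$. The extractor requires $\beta<\half$ and $\eps \ge 2^{-k^\beta}$, both of which hold by hypothesis and our choice of $\alpha$. Setting $E(x,(y_1,y_2)) \eqdef E'(C(x,y_1),y_2)$, the triangle inequality gives combined error at most $\eps$, the combined seed length is $O(\log n + \logeps)$, and the output is $\Omega(k)$ bits, matching the theorem.

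The assembly above is essentially bookkeeping, so the real work lies in the two supporting lemmas. The main obstacle I expect is verifying that the classical condenser reduction remains sound in the presence of a quantum adversary: one must show that seed-wise injectivity of a classically lossless condenser preserves conditional min-entropy against an arbitrary quantum state $\rho$, not just against a classical one, and that the closeness to a convex combination of flat sources is preserved in trace distance after conditioning on the (public) seed. Once this quantum version of the \cite{TUZ07} reduction is established and the NZ96 composition is shown to preserve the quantum-proof property when the input is already almost uniform, the theorem follows by the parameter tracking above.
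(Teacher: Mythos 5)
Your high-level plan — condense first, then apply the NZ96-style composed extractor — is exactly the paper's strategy (Theorem~\ref{thm:reduction} composed with Theorem~\ref{thm:main-qproof}). But the first stage as you've written it contains a concrete error that makes the reduction break.

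You assert that ``for each fixed seed value $y_1$, the map $C(\cdot,y_1)$ is injective on the support of $X$,'' and you build the composition as $E(x,(y_1,y_2))=E'(C(x,y_1),y_2)$, treating $C(X,y_1)$ as a flat source of min-entropy $k$ for each $y_1$. Seed-wise injectivity is simply not a property of lossless condensers: the guarantee (Definition~\ref{def:condenser}) is only that the \emph{joint} distribution $U_d \circ C(X,U_d)$ is $\eps$-close to one with $d+k$ min-entropy, which is a statistical, seed-averaged statement. For a fixed $y_1$, $C(\cdot,y_1)\!\restriction_X$ may be far from injective, in which case $C(X,y_1)$ is neither flat nor of min-entropy $k$, and the hypothesis of your $E'$ — a quantum-proof extractor \emph{for flat distributions} — does not hold. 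You cannot salvage this by decomposing into flat pieces, since that classical trick is unavailable conditioned on a quantum $\rho$.

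The paper repairs precisely this point in Theorem~\ref{thm:reduction}: the composition is $EC(x,(y_1,y_2))=E\big((C(x,y_1),y_1),\,y_2\big)$, with the condenser seed $y_1$ appended to $E$'s \emph{source}, and, in the reduction's predictor/distinguisher argument, $y_1$ is also handed to the adversary as part of the side information $\rho'$. Fact~\ref{fact:unique-neigh} supplies the correct injectivity statement: there is a map $D$, injective on $X\times\B^d$ and agreeing with $(x,y_1)\mapsto(C(x,y_1),y_1)$ on a $1-\eps_1$ fraction, so $B=D(X\times\B^d)$ is a genuine flat source of min-entropy $d+k$, and the claim $\minent(B;\rho')\ge k_2$ transfers the conditional min-entropy through $D$. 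Consequently the inner extractor must be a quantum-proof $(d+(1+\alpha)k,\,d+k,\,(1-\beta)k,\,\cdot)$ strong extractor for flat distributions, not an $((1+\alpha)k,\,k,\,(1-\beta)k,\,\cdot)$ one; your parameter bookkeeping, while close, omits the $d$ contribution to both source length and min-entropy, and with it the slight tightening of $\beta$ needed so that $(1-\beta)k$ remains above half of $d+(1+\alpha)k$. Once you replace ``seed-wise injectivity'' with the joint statistical injectivity of Fact~\ref{fact:unique-neigh}, feed $y_1$ into both the source and the adversary's state, and retrace the parameters, you recover exactly the paper's argument.
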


Consequently,

\begin{theorem}
\label{thm:main-storage}
For any $\beta < \half$ and $\eps \ge 2^{-k^\beta}$, there exists an explicit $(n,k,\beta k,\eps)$ strong extractor against quantum storage, $E:\B^n \times \B^t \to \B^m$, with seed length $t=O(\log n+\logeps)$ and output length $m=\Omega(k)$.
\end{theorem}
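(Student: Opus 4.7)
The plan is to derive Theorem~\ref{thm:main-storage} as a short corollary of Theorem~\ref{thm:main-flat} and Lemma~\ref{lem:flat-to-storage}, together with the standard reduction from arbitrary $(n,k)$-sources to flat $(n,k)$-sources.

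First, I would invoke Theorem~\ref{thm:main-flat} with the given parameters $n,k,\eps,\beta$ to obtain an explicit quantum-proof $(n,k,(1-\beta)k,\eps)$ strong extractor $E:\B^n\times\B^t\to\B^m$ for flat sources, with $t=O(\log n+\logeps)$ and $m=\Omega(k)$. Applying Lemma~\ref{lem:flat-to-storage} with $f=k$ directly tells me that this same $E$ is an $(n,k,\beta k,\eps)$ strong extractor against quantum storage when the source is a flat source of min-entropy exactly $k$: a quantum adversary holding at most $\beta k$ qubits can only lower the conditional min-entropy by $\beta k$, so the conditional min-entropy remains at least $(1-\beta)k$ and the flat-source guarantee of Theorem~\ref{thm:main-flat} kicks in.

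To pass from flat sources to arbitrary $X$ with $\minent(X)\geq k$, I would appeal to the standard classical fact that such an $X$ can be written as a convex combination $X=\sum_i \lambda_i X_i$ of flat $(n,k)$-sources. Because the adversary's encoding $x\mapsto\rho(x)$ is defined pointwise in $x$ and uses at most $\beta k$ qubits regardless of which $X_i$ the input was sampled from, the joint cq-state $X\circ\rho(X)$ decomposes as $\sum_i \lambda_i\,X_i\circ\rho(X_i)$. The previous paragraph gives $Y\circ E(X_i,Y)\circ\rho(X_i)$ $\eps$-close to $Y\circ U_m\circ\rho(X_i)$ in trace distance for each $i$, and convexity of the trace distance transfers the $\eps$ bound to the mixture $Y\circ E(X,Y)\circ\rho(X)$, which is the desired conclusion.

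Since Theorem~\ref{thm:main-flat} and Lemma~\ref{lem:flat-to-storage} are both already in hand, the only real work is this convex-decomposition bookkeeping. The one point I would be careful about is that the decomposition of $X$ is purely classical, whereas the adversary is quantum; but because the map $x\mapsto\rho(x)$ does not depend on the distribution on $x$, the decomposition on $X$ induces exactly the analogous decomposition on the joint cq-state, and the argument goes through by a routine application of the triangle inequality for trace distance.
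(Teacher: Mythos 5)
Your argument is logically correct, and in fact it is precisely the route the paper's introduction gestures at when it writes ``Consequently,'' between Theorem~\ref{thm:main-flat} and Theorem~\ref{thm:main-storage}. However, the paper's explicit proof in Section~\ref{sec:final} does \emph{not} go through Theorem~\ref{thm:main-flat}: it instead applies the \emph{second} item of Theorem~\ref{thm:reduction} (the storage-model version of the condenser reduction) directly to the high-entropy quantum-proof extractor $E$ of Theorem~\ref{thm:main-qproof} and the GUV condenser of Theorem~\ref{thm:GUV}, choosing $\alpha=2(1-\beta)(1-\zeta)-1>0$. Your derivation instead uses the first item of Theorem~\ref{thm:reduction} (packaged as Theorem~\ref{thm:main-flat}) and then converts via Lemma~\ref{lem:flat-to-storage}. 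Both routes rest on the same two ingredients and land on the same statement; yours makes the dependency ``flat-distribution extractor $\Rightarrow$ storage extractor'' explicit, whereas the paper's applies the storage-model reduction directly and thereby avoids passing through the flat-distribution theorem at all.

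One small misreading is worth flagging: you state that Lemma~\ref{lem:flat-to-storage} only gives you an extractor against quantum storage ``when the source is a flat source of min-entropy exactly $k$,'' and you then append a convex-decomposition argument to handle arbitrary $X$. This last step is redundant. Lemma~\ref{lem:flat-to-storage} already concludes that $E$ is an $(n,f,f-k,\eps)$ strong extractor against quantum storage in the sense of Definition~\ref{def:ext-q-storage}, which quantifies over \emph{all} sources $X$ with $\minent(X)\ge f$ — not just flat ones. The convex-decomposition bookkeeping you redo by hand is exactly what Lemma~\ref{lem:storage-flat} already carries out inside the proof of Lemma~\ref{lem:flat-to-storage}. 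So your extra paragraph is a correct but superfluous re-derivation of that internal step; the theorem follows from your first two sentences alone.
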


This gives the first logarithmic seed length extractor against $b$
quantum storage that works for every min-entropy $k$ and extracts a
constant fraction of the entropy, and it is applicable whenever
$b=\beta k$ for $\beta<\half$.

We would like to stress that in most practical applications, and in particular in cryptographic applications such as quantum key distribution, it is generally impossible to bound the \emph{size} of the side information. For example, in quantum key distribution where extractors are used for privacy amplification, the conditional min-entropy of the source can be estimated by measuring the noise on the channel, whereas any estimate on the adversary's memory is an unproven assumption. Thus, an extractor proven to work only against quantum storage cannot be used in quantum key distribution protocols. We nevertheless feel that proving a result in the bounded storage model may serve as a first step towards solving the general question.

In fact, the second component in the above construction also works in the general quantum-proof setting. Specifically, this gives an extractor with seed length $t=O(\log n+\logeps)$ that extracts $\Omega(n)$ bits from any source with conditional min-entropy at least $(1-\beta)n$ for $\beta<\half$.

\begin{theorem}
\label{thm:main-qproof}
For any $\beta < \half$ and $\eps \ge 2^{-n^\beta}$, there exists an explicit quantum-proof $(n,(1-\beta)n,\eps)$ strong extractor $E:\B^n \times \B^t \to \B^m$, with seed length $t=O(\log n+\logeps)$ and output length $m=\Omega(n)$.
\end{theorem}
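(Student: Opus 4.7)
The construction is the same ``second component'' extractor built in the proof of Theorem~\ref{thm:main-flat}: the NZ96-style composition extractor that works whenever the source's conditional min-entropy is at least a $(1-\beta)$ fraction of its length, for some $\beta<\tfrac12$. In Theorem~\ref{thm:main-flat} the source was additionally assumed flat on $\B^{(1+\alpha)f}$ with classical min-entropy exactly $f$, so that the conditional entropy deficit from the ambient dimension was $(\alpha+\beta)f$. In Theorem~\ref{thm:main-qproof} the hypothesis is simpler: $X$ is an arbitrary distribution on $\B^n$ with $\minent(X|\rho)\ge (1-\beta)n$, i.e., conditional entropy deficit at most $\beta n < n/2$. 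This is precisely the same parameter regime as the second component of Theorem~\ref{thm:main-flat} with $\alpha=0$, except that the flatness of $X$ is dropped.

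The strategy is to observe that the analysis of the composition extractor only uses the conditional min-entropy bound and never the flatness of $X$. First, I would partition $X=X_1\circ X_2$ with $|X_1|,|X_2|=\Theta(n)$ and invoke the quantum chain rule for min-entropy to get $\minent(X_1|\rho)\ge |X_1|-\beta n$ and, with probability at least $1-\eps/2$ over $x_1\sim X_1$, $\minent(X_2\mid X_1=x_1,\rho)\ge |X_2|-\beta n-O(\logeps)$. Since $\beta<1/2$, one can choose the split so that each block has conditional min-entropy at least a $(1-\beta')$-fraction of its length for some $\beta'<1/2$, yielding a quantum-proof block source. Second, apply a quantum-proof extractor $E_1$ for almost-uniform sources to $X_1$ using the seed $Y$ to produce $W=E_1(X_1,Y)$ that is close to uniform given $(Y,\rho)$; then use $W$ (padded with a small suffix of $Y$ if necessary) as the seed of a second quantum-proof extractor $E_2$ applied to $X_2$, extracting the final $\Omega(n)$-bit output. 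A standard hybrid argument propagates closeness to uniform across the two stages while preserving the $O(\log n+\logeps)$ seed length.

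The main obstacle is the block-source decomposition itself. In the flat setting of Theorem~\ref{thm:main-flat} this step reduces to a combinatorial counting of residual supports; here each block is an arbitrary distribution and one must appeal to the quantum min-entropy chain rule, which holds with only a $\logeps$ additive slack that can be absorbed into the error budget. Once this decomposition is in hand, every other ingredient --- the block-source composition, the hybrid argument bridging $E_1$ and $E_2$, and the quantum-proofness of $E_1,E_2$ on close-to-uniform sources --- is literally identical to the second component of Theorem~\ref{thm:main-flat}, and the parameters of Theorem~\ref{thm:main-qproof} follow immediately.
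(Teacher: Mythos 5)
Your high-level plan matches the paper's construction (Section~\ref{sec:ext-less-than-half}): split $X$ into two halves of length $n/2$ each, use a prefix/chain-rule lemma (Lemma~\ref{lem:prefix}) to exhibit a quantum block-source structure, and then compose two instantiations of a quantum-proof Trevisan extractor. However, there is a genuine inconsistency between the block decomposition you state and the composition order you use, and as written the hybrid argument would not go through.

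You derive that for most prefixes $x_1$, $\minent(X_2 \mid X_1 = x_1 ; \rho_{x_1})$ is large. The composition that matches this decomposition is the one the paper uses, namely $E(x,y) = E_1(x_1, E_2(x_2,y))$: first apply $E_2$ to $X_2$ with the raw seed $y$, producing a short string which (by the decomposition, averaging over $x_1$) is close to uniform \emph{even given} $X_1$, $Y$ and $\rho$, and hence serves as a valid seed for $E_1$ on $X_1$. You instead propose to compute $W = E_1(X_1, Y)$ first and then run $E_2(X_2, W)$. For that to be sound you would need $W$ to be close to uniform \emph{given} $X_2$, $Y$ and $\rho$, since $X_2$ and $W$ must be essentially independent before $E_2$ can be applied. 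What you actually argue --- $W$ close to uniform given $(Y,\rho)$ --- does not imply this, because $W$ is a deterministic function of $X_1$ (and $Y$), and $X_1$ is correlated with $X_2$. The statement you need would follow from the \emph{reverse} conditioning (for most $x_2$, $\minent(X_1 \mid X_2 = x_2 ; \rho_{x_2})$ is large), which is also available from Lemma~\ref{lem:prefix} by symmetry of the split, but it is not the decomposition you invoke. So either flip the composition to $E_1(x_1, E_2(x_2, y))$ as the paper does, or change the direction of the conditioning in your decomposition step. A smaller point: the proposal does not identify the concrete quantum-proof extractors needed to realize the $O(\log n + \logeps)$ seed and $\Omega(n)$ output; the paper uses two Trevisan instantiations (Theorems~\ref{thm:Trevisan-polylog} and~\ref{thm:Trevisan-log}), and the parameter check that the output length of the seed-generating extractor dominates the seed length of the other is not automatic.
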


The rest of the paper is organized as follows.
Section~\ref{sec:preliminaries} contains all the necessary
preliminaries, including the formal definitions of min-entropy,
quantum-proof extractors and extractors against quantum storage. In
Section~\ref{sec:reduction} we give the reduction which shows it is
sufficient to construct extractors for sources with nearly full
min-entropy, when working in the bounded storage or flat sources
settings. In Section~\ref{sec:ext-less-than-half} we describe the
construction of quantum-proof extractors when the conditional
min-entropy is more than half, and give the proof of
Theorem~\ref{thm:main-qproof}. The proofs of
Theorems~\ref{thm:main-flat} and~\ref{thm:main-storage} are given in
Section~\ref{sec:final}.

\section{Preliminaries}
\label{sec:preliminaries}

\paragraph{Distributions.}
A distribution $D$ on $\Lambda$ is a function $D:\Lambda \to
[0,1]$ such that $\sum_{a \in \Lambda} D(a)=1$. We denote by $x
\sam D$ sampling $x$ according to the distribution $D$. Let $U_t$
denote the uniform distribution over $\B^t$. We measure the distance
between two distributions with the variational distance $
|D_1-D_2|_1 = \half \sum_{a \in \Lambda} |D_1(a)-D_2(a)|.$ The
distributions $D_1$ and $D_2$ are \emph{$\eps$-close} if
$|D_1-D_2|_1 \le \eps$.


The min-entropy of $D$ is denoted by $\minent(D)$ and is defined
to be $$\minent(D) = \min_{a: D(a)>0} -\log(D(a)).$$ If
$\minent(D) \ge k$ then for all $a$ in the support of $D$ it holds
that $D(a) \le 2^{-k}$. A distribution is \emph{flat} if it is
uniformly distributed over its support. Every distribution $D$
with $\minent(D) \ge k$ can be expressed as a convex combination
$\sum \alpha_i D_i$ of flat distributions $\set{D_i}$, each with
min-entropy at least $k$. We sometimes abuse notation and identify
a set $X$ with the flat distribution that is uniform over $X$.

If $X$ is a distribution over $\Lambda_1$ and $f: \Lambda_1 \to
\Lambda_2$ then $f(X)$ denotes the distribution over $\Lambda_2$
obtained by sampling $x$ from $X$ and outputting $f(x)$. If $X_1$
and $X_2$ are \emph{correlated} distributions we denote their joint
distribution by $X_1 \circ X_2$. If $X_1$ and $X_2$ are
\emph{independent} distributions we replace $\circ$ by $\times$ and
write $X_1 \times X_2$.

\paragraph{Mixed states.}
A pure state is a vector in some Hilbert space. A general quantum
system is in a {\em mixed state\/} --- a probability distribution
over pure states. Let $\{p_i, \ket{\phi_i}\}$ denote the mixed
state where the pure state~$\ket{\phi_i}$ occurs with
probability~$p_i$. The behavior of the mixed state
$\set{p_i,\ket{\phi_i}}$ is completely characterized by its \emph{
density matrix\/}~$\rho = \sum_i p_i \ketbra{\phi_i}{\phi_i}$, in
the sense that two mixed states with the same density matrix have
the same behavior under any physical operation. Notice that a
density matrix over a Hilbert space $\cH$ belongs to
$\Hom(\cH,\cH)$, the set of linear transformation from $\cH$ to
$\cH$. Density matrices are positive semi-definite operators and
have trace $1$.

The \emph{trace distance} between density matrices $\rho_1$ and
$\rho_2$ is $\tnorm{\rho_1-\rho_2} = \half \sum_i |\lambda_i|$,
where $\set{\lambda_i}$ are the eigenvalues of $\rho_1-\rho_2$.
The trace distance coincides with the variational distance when
$\rho_1$ and $\rho_2$ are classical states ($\rho$ is classical if it is diagonal in the standard basis).
Similarly to
probability distributions, the density matrices $\rho_1$ and
$\rho_2$ are \emph{$\eps$-close} if the trace distance between
them is at most $\eps$.

A positive operator valued measure (POVM) is the most general
formulation of a measurement in quantum computation. A POVM on a
Hilbert space $\cH$ is a collection $\set{F_i}$ of positive
semi-definite operators $F_i:\Hom(\cH,\cH) \to \Hom(\cH,\cH)$ that
sum-up to the identity transformation, i.e., $F_i \succeq 0$ and
$\sum F_i=I$. Applying a POVM $F=\set{F_i}$ on a density matrix
$\rho$ results in the distribution $F(\rho)$ that outputs $i$ with
probability $\Tr(F_i \rho)$.

A Boolean measurement $\set{F,I-F}$ \emph{$\eps$-distinguishes}
$\rho_1$ and $\rho_2$ if $|\Tr(F \rho_1) - \Tr(F \rho_2)| \ge
\eps$.

We shall need the following facts regarding the trace distance.
\begin{fact}\label{fact:trace-norm-distinguisher}
If $\tnorm{\rho_1 -\rho_2} = \delta$ then there exists a Boolean
measurement that $\delta$-distinguishes $\rho_1$ and $\rho_2$.
\end{fact}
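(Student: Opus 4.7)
The plan is to exhibit a distinguishing measurement via the spectral decomposition of the Hermitian operator $\rho_1 - \rho_2$. I would first write $\rho_1 - \rho_2 = \sum_i \lambda_i \ketbra{v_i}{v_i}$ in an orthonormal eigenbasis with real eigenvalues, and separate the positive and negative eigenvalues to obtain a decomposition $\rho_1 - \rho_2 = P - N$ with $P, N \succeq 0$ supported on orthogonal subspaces. Since $\Tr(\rho_1) = \Tr(\rho_2) = 1$ we have $\Tr(P) = \Tr(N)$, and comparing with $\tnorm{\rho_1 - \rho_2} = \half \sum_i |\lambda_i| = \half(\Tr(P) + \Tr(N)) = \delta$ yields $\Tr(P) = \Tr(N) = \delta$.

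The natural candidate measurement is then $\{F, I-F\}$, where $F = \sum_{\lambda_i > 0} \ketbra{v_i}{v_i}$ is the projector onto the positive-eigenvalue eigenspace. Both $F$ and $I-F$ are positive semi-definite and sum to $I$, so this is a valid Boolean POVM. The key computation is $\Tr(F(\rho_1 - \rho_2)) = \Tr(FP) - \Tr(FN) = \Tr(P) - 0 = \delta$, which gives $|\Tr(F\rho_1) - \Tr(F\rho_2)| = \delta$ as desired.

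There is no substantial obstacle here: the statement is the standard operational interpretation of the trace distance, and once the positive/negative decomposition of the Hermitian difference $\rho_1 - \rho_2$ is in hand the argument reduces to simple trace bookkeeping. The only point needing care is keeping track of the $\half$ in the definition of $\tnorm{\cdot}$, which is what makes $\Tr(P) = \delta$ rather than $2\delta$ and ultimately derives from the zero-trace property of $\rho_1 - \rho_2$.
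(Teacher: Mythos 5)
Your proof is correct and is the standard argument for the operational interpretation of the trace distance (sometimes attributed to Helstrom). The paper states this as a Fact without proof, treating it as background knowledge, so there is no in-paper argument to compare against; but your derivation---spectral decomposition into $P - N$ with orthogonal supports, using $\Tr(\rho_1-\rho_2)=0$ to get $\Tr(P)=\Tr(N)=\delta$ under the paper's normalization $\tnorm{\cdot}=\half\sum|\lambda_i|$, and then taking $F$ to be the projector onto the positive eigenspace---is exactly the canonical route, with the bookkeeping around the factor of $\half$ handled correctly.
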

\begin{fact}\label{fact:trace-norm-reduced-by-processing}
If $\rho_1$ and $\rho_2$ are $\eps$-close then $\cE(\rho_1)$ and $\cE(\rho_2)$ are $\eps$-close, for any physically realizable transformation $\cE$.
\end{fact}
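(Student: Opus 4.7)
The plan is to unpack ``physically realizable'' as the standard CPTP (completely positive, trace-preserving) condition, and then derive the bound directly from the two ingredients ``$\cE$ maps positive operators to positive operators'' and ``$\cE$ preserves trace.'' Either property can be read off the Stinespring/Kraus picture (adjoin an ancilla in a fixed state, apply a unitary, trace out a subsystem), so nothing deeper is needed to justify them; the real content lies in a short positivity estimate.

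Concretely, I would run the standard Jordan-decomposition argument. Set $B = \rho_1 - \rho_2$ and write its Jordan decomposition $B = B_+ - B_-$ with $B_+, B_- \succeq 0$ of orthogonal support; since $B$ is traceless, the definition $\tnorm{\cdot} = \half \sum_i |\lambda_i|$ collapses to $\tnorm{B} = \Tr(B_+) = \Tr(B_-)$. Applying $\cE$, we get $\cE(\rho_1) - \cE(\rho_2) = \cE(B_+) - \cE(B_-)$, with both summands still positive. Letting $\Pi$ project onto the positive eigenspace of $\cE(\rho_1) - \cE(\rho_2)$, so that $\tnorm{\cE(\rho_1) - \cE(\rho_2)} = \Tr(\Pi(\cE(\rho_1) - \cE(\rho_2)))$, one drops the non-negative term $\Tr(\Pi \cE(B_-))$ and uses $\Pi \preceq I$ on the positive operator $\cE(B_+)$ to obtain
\[
\tnorm{\cE(\rho_1) - \cE(\rho_2)} \;\leq\; \Tr(\Pi \cE(B_+)) \;\leq\; \Tr(\cE(B_+)) \;=\; \Tr(B_+) \;=\; \tnorm{\rho_1 - \rho_2} \;\leq\; \eps,
\]
where the middle equality uses trace preservation. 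This is the entire proof.

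A second, more operational route, closer in spirit to Fact~\ref{fact:trace-norm-distinguisher}, is the duality argument by contradiction: if some Boolean POVM $\{F, I-F\}$ were to $\delta$-distinguish $\cE(\rho_1)$ and $\cE(\rho_2)$ for some $\delta > \eps$, then the Heisenberg-picture adjoint $\{\cE^\dagger(F),\, I - \cE^\dagger(F)\}$ is still a valid POVM (because $\cE^\dagger$ is completely positive and unital), and it $\delta$-distinguishes $\rho_1$ from $\rho_2$, contradicting $\tnorm{\rho_1 - \rho_2} \leq \eps$ via the same eigenvalue bound used above.

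There is essentially no obstacle: the only subtle point is fixing the meaning of ``physically realizable'' as CPTP, after which positivity of $\cE(B_-)$, the bound $\Pi \preceq I$, and trace preservation directly yield the inequality in three lines. I would present the direct Jordan-decomposition version, since it avoids having to first prove the converse of Fact~\ref{fact:trace-norm-distinguisher}.
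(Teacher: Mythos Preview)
Your proof is correct and is the standard argument; the paper itself does not prove this statement at all, citing it as a background fact without proof. So there is nothing to compare against, and your Jordan-decomposition derivation is exactly what one would supply if a proof were required.
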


\subsection{Min-entropy}

To define the notion of quantum-proof extractors we first need the notion
of quantum encoding of classical states.
\begin{definition}
Let $X$ be a distribution over some set $\Lambda$.
\begin{itemize}
  \item  An \emph{encoding} of $X$ is a collection $\rho = \set{\rho(x)}_{x
\in \Lambda}$ of density matrices.
  \item An encoding $\rho$ is a \emph{$b$-storage encoding} if $\rho(x)$ is a mixed state over $b$ qubits, for all $x \in \Lambda$.
  \item An encoding is \emph{classical} if $\rho(x)$ is classical
for all $x$.
\end{itemize}
\end{definition}
The average encoding is denoted by $\brho_X = \E_{x \sam X} [\rho(x)]$.

Next we define the notion of conditional min-entropy. The
conditional min-entropy of $X$ given $\rho(X)$ measures the average
success probability of predicting $x$ given the encoding $\rho(x)$.
Formally,
\begin{definition}
The \emph{conditional min-entropy of $X$ given an encoding $\rho$}
is
$$ \minent(X;\rho) = -\log \sup_{F} \E_{x \sam X}[\Tr(F_{x}\rho(x))], $$
where the supremum ranges over all POVMs $F = \set{F_x}_{x \in \Lambda}$.
\end{definition}

We remark that there exists another definition of conditional min-entropy
in the quantum setting, which is more algebraic in flavor. However,
the two definitions are equivalent, as shown in~\cite{KRS09}.

\begin{proposition}[{\cite[Proposition 2]{KT08}}]
\label{prop:guessing-storage} If $\rho$ is a $b$-storage encoding of
$X$ then $\minent(X;\rho) \ge \minent(X)-b$.
\end{proposition}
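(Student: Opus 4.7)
The goal is to bound the guessing probability
$p_{\text{guess}} = \sup_F \E_{x \sim X}[\Tr(F_x \rho(x))]$
by $2^{-\minent(X)+b}$, which by taking $-\log$ yields the desired inequality. The plan is to exploit two features in turn: each $\rho(x)$ is a density matrix on a space of dimension $2^b$, and each $\Pr[X=x]$ is at most $2^{-\minent(X)}$.

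First I would fix an arbitrary POVM $F = \{F_x\}_{x \in \Lambda}$ on the $b$-qubit space. Since $\rho(x)$ is a density matrix, its eigenvalues lie in $[0,1]$, so $\rho(x) \preceq I$ where $I$ is the identity on the $2^b$-dimensional Hilbert space. Combined with $F_x \succeq 0$, a standard positivity argument (writing $F_x^{1/2}\rho(x) F_x^{1/2} \preceq F_x$ and taking trace) gives the pointwise bound $\Tr(F_x \rho(x)) \le \Tr(F_x)$.

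Next I would plug this into the expectation and use the min-entropy of $X$:
\begin{align*}
\E_{x \sim X}[\Tr(F_x \rho(x))] \;\le\; \sum_{x} \Pr[X=x]\, \Tr(F_x) \;\le\; 2^{-\minent(X)} \sum_{x} \Tr(F_x).
\end{align*}
The POVM condition $\sum_x F_x = I$ then yields $\sum_x \Tr(F_x) = \Tr(I) = 2^b$, so $\E_{x\sim X}[\Tr(F_x \rho(x))] \le 2^{-\minent(X) + b}$. Since this holds for every POVM $F$, taking the supremum and then $-\log$ gives $\minent(X;\rho) \ge \minent(X) - b$, as required.

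There is no real obstacle here; the only subtlety worth highlighting is that $\rho(x) \preceq I$ on the $b$-qubit space is precisely what lets the dimension factor $2^b$ enter the bound. If instead of a bound on the number of qubits one only had, say, a bound on the rank, the same argument would go through with $2^b$ replaced by the rank bound.
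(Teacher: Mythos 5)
The paper does not prove this proposition; it cites it verbatim from~\cite{KT08}, so there is no in-paper proof to compare against. Your argument is correct and self-contained, and it is the standard dimension-counting proof: the operator inequality $\Tr(F_x\rho(x)) \le \Tr(F_x)$ (from $\rho(x) \preceq I$ on the $b$-qubit space), the bound $\Pr[X=x] \le 2^{-\minent(X)}$, and the normalization $\sum_x\Tr(F_x) = \Tr(I) = 2^b$ combine to give $p_{\mathrm{guess}} \le 2^{-\minent(X)+b}$, and the supremum over POVMs preserves the bound.

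One caveat on your closing remark: replacing $2^b$ by a uniform \emph{rank} bound on each $\rho(x)$ does \emph{not} make the same argument go through. A rank bound only gives $\rho(x) \preceq P_x$ for an $x$-dependent projector $P_x$, and $\sum_x \Tr(F_x P_x)$ is not bounded by $\max_x\mathrm{rank}(\rho(x))$ in general. Concretely, let $X$ be uniform over $N$ values and $\rho(x)$ pairwise orthogonal pure states in an $N$-dimensional space: each $\rho(x)$ has rank one, yet the guessing probability is $1$, far exceeding $2^{-\minent(X)}\cdot 1 = 1/N$. What the argument really uses is a bound on the dimension of a single subspace supporting \emph{all} the $\rho(x)$ simultaneously --- i.e.\ the number of qubits --- which is precisely the hypothesis of the proposition.
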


We shall need the following standard lemmas regarding min-entropy
that can be found, e.g., in~\cite{R05}. The first lemma says that
cutting $\ell$ bits from a source cannot reduce the min-entropy by
more than $\ell$.
\begin{lemma}\label{lem:cutting-source}
Let $X=X_1 \circ X_2$ be a distribution over bit strings and $\rho$
be an encoding such that $\minent(X;\rho) \ge k$, and suppose that
$X_2$ is of length $\ell$. Let $\rho'$ be the encoding of $X_1$
defined by $\rho'(x_1) = \E_{x \sam (X|X_1=x_1)} [\rho(x)]$. Then,
$\minent(X_1;\rho') \ge k-l$.
\end{lemma}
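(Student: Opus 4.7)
The plan is to prove the contrapositive in the POVM-guessing formulation: take any POVM $\{F_{x_1}\}$ that tries to guess $X_1$ from $\rho'(x_1)$, and lift it to a POVM on the same Hilbert space that guesses the full pair $(X_1, X_2)$ from $\rho(x)$, losing only a factor of $2^{-\ell}$ in the guessing probability. Since $X_2$ has length $\ell$, this factor is exactly what we need.

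Concretely, given any POVM $F = \{F_{x_1}\}$ on the Hilbert space where $\rho$ lives, define operators $G_{(x_1,x_2)} = 2^{-\ell} F_{x_1}$ for all pairs $(x_1,x_2) \in \mathrm{supp}(X_1) \times \B^\ell$. Each $G_{(x_1,x_2)}$ is positive semidefinite, and
\[
\sum_{x_1,x_2} G_{(x_1,x_2)} \;=\; \sum_{x_1} 2^{-\ell} \cdot 2^{\ell} F_{x_1} \;=\; \sum_{x_1} F_{x_1} \;=\; I,
\]
so $G = \{G_{(x_1,x_2)}\}$ is a valid POVM. The guessing probability of $G$ against $(X;\rho)$ is
\[
\E_{x \sam X}\!\left[\Tr(G_x \rho(x))\right] \;=\; 2^{-\ell} \sum_{x_1} \Pr[X_1=x_1] \sum_{x_2} \Pr[X_2=x_2 \mid X_1=x_1]\, \Tr(F_{x_1}\rho(x_1,x_2)),
\]
and by linearity of trace the inner sum is exactly $\Tr(F_{x_1}\rho'(x_1))$, so the whole expression equals $2^{-\ell}\, \E_{x_1 \sam X_1}[\Tr(F_{x_1}\rho'(x_1))]$.

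From the hypothesis $\minent(X;\rho) \ge k$ we have $\E_{x \sam X}[\Tr(G_x \rho(x))] \le 2^{-k}$ for every POVM $G$, hence $\E_{x_1 \sam X_1}[\Tr(F_{x_1}\rho'(x_1))] \le 2^{\ell - k}$. Taking the supremum over $F$ and the negative logarithm gives $\minent(X_1;\rho') \ge k - \ell$, as claimed. The only non-routine step is recognizing that the naive ``average out $X_2$'' strategy on the POVM side is optimal up to the uniform $2^{-\ell}$ guess for $X_2$; once that is in place the algebra is immediate.
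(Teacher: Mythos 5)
Your proof is correct and is essentially the paper's own argument written out explicitly in the POVM formalism: the POVM $G_{(x_1,x_2)}=2^{-\ell}F_{x_1}$ is precisely the paper's ``run the predictor for $X_1$, then guess $X_2$ uniformly at random'' strategy, and the $2^{-\ell}$ loss and final bound match line for line. No gap.
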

\begin{proof}
Given any predictor $P'$ which predicts $X_1$ from $\rho'$, we can construct a predictor $P$ for $X$ (from $\rho$) as follows: $P$ simply runs $P'$ to obtain a prediction for the prefix $x_1$, and then appends it with a randomly chosen string from $\B^\ell$. Then,
\begin{eqnarray*}
 \Pr_{x_1\circ\, x_2 \sam X}[P(\rho(x_1 \circ x_2)) = x_1 \circ x_2] &=& \Pr_{x_1\circ\, x_2 \sam X}[P'(\rho(x_1 \circ x_2)) = x_1] \cdot 2^{-\ell} \\
 &=& \Pr_{x_1 \sam X_1}[P'(\rho'(x_1)) = x_1] \cdot 2^{-\ell}.
\end{eqnarray*}

Thus, if $\minent(X_1;\rho') < k-l$ then there would have been a
predictor which predicts $X$ with probability greater than $2^{-k}$
and this cannot be the case since $\minent(X;\rho) \ge k$.
\end{proof}

The second lemma says that if a source has high min-entropy, then
revealing a short prefix (with high probability) does not change
much the min-entropy. The lemma is a generalization of a well known
classical lemma.
\begin{lemma}
\label{lem:prefix} Let $X=X_1 \circ X_2$ be a distribution and
$\rho$ be an encoding such that $\minent(X;\rho) \ge k$, and suppose
that $X_1$ is of length $\ell$. For a prefix $x_1$, let $\rho_{x_1}$
be the encoding of $X_2$ defined by $\rho_{x_1}(x_2) = \rho(x_1
\circ x_2)$. Call a prefix $x_1$ \emph{bad} if $\minent(X_2 ~|~ X_1
=x_1 ; \rho_{x_1}) \le r$ and denote by $B$ the set of bad prefixes.
Then, $$\Pr[X_1 \in B] \le 2^{\ell} \cdot 2^{r} \cdot 2^{-k}.$$
\end{lemma}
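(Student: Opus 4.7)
The plan is to construct, for any choice of predictors witnessing the low conditional min-entropy on bad prefixes, a single predictor for all of $X$ given $\rho$, and then invoke the hypothesis $\minent(X;\rho) \ge k$ to get the desired bound on $\Pr[X_1 \in B]$. Concretely, for each bad prefix $x_1 \in B$ the definition of conditional min-entropy hands us a POVM $\{G^{x_1}_{x_2}\}_{x_2}$ acting on $\rho_{x_1}$ that predicts $X_2 \mid X_1 = x_1$ with average success probability at least $2^{-r}$. For $x_1 \notin B$ I would simply pick any POVM indexed by $x_2$ (e.g. $G^{x_1}_0 = I$ and $G^{x_1}_{x_2} = 0$ otherwise), since those terms will only help the bound.

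The key gadget is the POVM on the full space defined by
\[
F_{x_1 \circ x_2} \;=\; \frac{1}{2^\ell}\, G^{x_1}_{x_2}.
\]
I would first verify that this is a valid POVM: each $F_{x_1 \circ x_2}$ is positive semi-definite, and
\[
\sum_{x_1, x_2} F_{x_1 \circ x_2} \;=\; \frac{1}{2^\ell} \sum_{x_1} \sum_{x_2} G^{x_1}_{x_2} \;=\; \frac{1}{2^\ell} \sum_{x_1} I \;=\; I.
\]
Intuitively this predictor first guesses $x_1$ uniformly at random (incurring a factor $2^{-\ell}$) and then uses the conditional predictor $G^{x_1}$ to guess $x_2$ from $\rho(x_1 \circ x_2) = \rho_{x_1}(x_2)$.

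Now I would compute the success probability of $F$ as a predictor of $X$ from $\rho$, restricting attention to $x_1 \in B$:
\[
\E_{x_1 \circ x_2 \sam X}\bigl[\Tr(F_{x_1 \circ x_2}\, \rho(x_1 \circ x_2))\bigr]
\;\ge\; \frac{1}{2^\ell} \E_{x_1 \sam X_1}\Bigl[\mathbf{1}[x_1 \in B] \cdot \E_{x_2 \sam (X_2 \mid X_1 = x_1)}[\Tr(G^{x_1}_{x_2} \rho_{x_1}(x_2))]\Bigr].
\]
By the choice of $G^{x_1}$ for $x_1 \in B$, the inner expectation is at least $2^{-r}$, so the overall success probability is at least $2^{-\ell} \cdot 2^{-r} \cdot \Pr[X_1 \in B]$. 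On the other hand, the assumption $\minent(X;\rho) \ge k$ means that no POVM can predict $X$ from $\rho$ with probability exceeding $2^{-k}$. Combining the two inequalities yields $\Pr[X_1 \in B] \le 2^{\ell} \cdot 2^{r} \cdot 2^{-k}$, as claimed.

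I don't anticipate any real obstacle here; the only subtle point worth being careful about is to write $F$ as a genuine POVM on the joint output alphabet $\B^\ell \times \B^{|X_2|}$ (rather than performing two sequential measurements on the same state, which in general does not make sense quantumly), and the averaged construction above does exactly that.
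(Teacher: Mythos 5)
Your proof is correct. It departs mildly from the paper's own argument in how it turns the family of conditional predictors into a single global one: the paper singles out the bad prefix $x_1'$ of largest probability (noting $\Pr[X_1 = x_1'] \ge 2^{-\ell}\Pr[X_1 \in B]$) and builds a predictor that commits to outputting $x_1'$ on the first coordinate, whereas you form a single averaged POVM $F_{x_1 \circ x_2} = 2^{-\ell}\,G^{x_1}_{x_2}$, which amounts to uniformly guessing the prefix and then running the matching conditional predictor. Both constructions pay the same $2^{-\ell}$ factor and yield exactly the same bound $\Pr[X_1 \in B] \le 2^{\ell+r-k}$. Your version is a bit more symmetric and makes the POVM normalization transparent; the paper's version avoids explicitly mixing over $2^\ell$ POVMs at the cost of a ``there exists a heavy prefix'' step. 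One shared loose end: both your line ``hands us a POVM \dots with average success probability at least $2^{-r}$'' and the paper's ``let $A_z$ denote the optimal predictor'' assume the supremum in the min-entropy definition is attained; in finite dimensions this is fine by compactness, and otherwise one takes $\delta$-approximate optimizers and lets $\delta\to 0$. You also correctly flag the subtlety that one must define a genuine joint POVM rather than two sequential measurements, and your construction handles it properly.
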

\begin{proof}
Let the prefix $x_1' \in B$ be the one with the largest probability
mass. Then, $\Pr[X_1 = x_1'] \ge \Pr[X_1 \in B] \cdot 2^{-\ell}$.
For any $z\in B$, let $A_z$ denote the optimal predictor that
predicts $X_2$ from $\rho_{z}$, conditioned on $X_1 = z$. By the
definition of min-entropy, for any $z\in B$,
$$ \E_{x_2 \sam (X_2|X_1=z)} \Pr[A_z(\rho_z(x_2)) = x_2] \ge 2^{-r}.$$
In particular this holds for $z = x_1'$.

Now, define a predictor $P$ for $X$ from $\rho$ by
$$P(\rho(x)) = x_1' \circ A_{x_1'}(\rho(x)),$$
that is, $P$ simply ``guesses" that the prefix is $x_1'$ and then applies the optimal predictor $A_{x_1'}$.
The average success probability of $P$ is
\begin{eqnarray*}
\E_{x \sam X} \big[ \Pr[P(\rho(x)) = x]\big] &=& \E_{x_1 \sam X_1} \left[ \E_{x_2 \sam (X_2|X_1=x_1)} \left[
\delta_{x_1,x_1'} \cdot \Pr[A_{x_1'}(\rho_{x_1'}(x_2)) = x_2] \right]\right]\\
&=& \Pr[X_1 = x_1'] \cdot \E_{x_2 \sam (X_2|X_1=x_1')} \left[\Pr[A_{x_1'}(\rho_{x_1'}(x_2)) = x_2 ] \right] \\
&\ge& \Pr[X_1 \in B] \cdot 2^{-\ell} \cdot  2^{-r}
\end{eqnarray*}
On the other hand, since $\minent(X;\rho) \ge k$, the average
success probability of $P$ is at most $2^{-k}$. Altogether, $\Pr[X_1
\in B] \le 2^{\ell} \cdot 2^{r} \cdot 2^{-k}$.
\end{proof}

\subsection{Quantum-proof extractors}
\label{sec:def:ext}

We now define the three different classes of extractors against quantum adversaries that we deal with in this paper.
We begin with the most general (and natural) definition:

\begin{definition}
\label{def:ext-q-knowledge} A function $E:\B^n \times \B^t \to \B^m$
is a \emph{quantum-proof $(n,k,\eps)$ strong extractor} if for every
distribution $X$ over $\B^n$ and every encoding $\rho$ such that
$\minent(X;\rho) \ge k$,
$$
\tnorm{U_t \circ E(X,U_t) \circ \rho(X) -U_{t+m} \times \brho_X} \le \eps.
$$
\end{definition}

We use $\circ$ to denote correlated values. Thus, $U_t \circ E(X,U_t) \circ \rho(X)$ denotes the mixed state obtained by sampling $x \sam X, y \sam U_t$ and outputting $\ketbra{y,E(x,y)}{y,E(x,y)} \tensor \rho(x)$. Notice that all 3 registers are correlated. When a register is independent of the others we use $\times$ instead of $\circ$. Thus, $U_{t+m} \times \brho_X$ denotes the mixed state obtained by sampling $x \sam X, w \sam U_{t+m}$ and outputting $\ketbra{w}{w} \tensor \rho(x)$.

Next we define quantum-proof extractors for \emph{flat distributions}:

\begin{definition}
\label{def:ext-flat} A function $E:\B^n \times \B^t \to \B^m$ is a
\emph{quantum-proof $(n,f,k,\eps)$ strong extractor for flat
distributions} if for every \emph{flat} distribution $X$ over $\B^n$
with exactly $f$ min-entropy and every encoding $\rho$ of $X$ with
$\minent(X;\rho) \ge k$,
$$
\tnorm{U_t \circ E(X,U_t) \circ \rho(X) -U_{t+m} \times \brho_X} \le \eps.
$$
\end{definition}

We remark that in the classical setting every extractor for flat distributions is also an extractor for general distributions, since every distribution with min-entropy $k$ can be expressed as a convex combination of flat distributions over $2^k$ elements.

Finally we define extractors against quantum storage:

\begin{definition}
\label{def:ext-q-storage}
A function $E:\B^n \times \B^t \to \B^m$ is an \emph{$(n,k,b,\eps)$ strong extractor against quantum storage}
if for every distribution $X$ over $\B^n$ with $\minent(X) \ge k$ and every $b$-storage encoding $\rho$ of $X$,
$$
\tnorm{U_t \circ E(X,U_t) \circ \rho(X) -U_{t+m} \times \brho_X} \le \eps.
$$
\end{definition}

The next lemma shows it sufficient to consider only flat distributions when arguing about the correctness
of extractors against quantum storage.
\begin{lemma}
\label{lem:storage-flat}
If $E$ is \emph{not} an $(n,k,b,\eps)$ strong extractor against quantum storage
then there exists a set $X$ of cardinality $2^k$ and a $b$-storage encoding $\rho$ such that
$E$ fails on $(X;\rho)$, that is,
$$
\tnorm{U_t \circ E(X,U_t) \circ \rho(X) -U_{t+m} \times \brho_X} > \eps.
$$
\end{lemma}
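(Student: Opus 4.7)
The plan is to reduce to flat distributions of size exactly $2^k$ via a convex-combination argument, then conclude by the triangle inequality for the trace norm. Concretely, suppose $E$ fails on some pair $(X;\rho)$ with $\minent(X)\ge k$ and $\rho$ a $b$-storage encoding. The first step is to write the classical distribution $X$ as a convex combination $X=\sum_i \alpha_i X_i$, where each $X_i$ is flat over a set of cardinality exactly $2^k$. This decomposition is standard: since $\Pr[X=x]\le 2^{-k}$ for every $x$, the vector $(2^k\Pr[X=x])_x$ lies in the convex hull of the $0/1$ indicator vectors of subsets of $\B^n$ of size $2^k$ (this follows, e.g., by a greedy/Birkhoff-style decomposition on the probability vector).

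The second step is to observe that the encoding $\rho$ serves unchanged as a $b$-storage encoding for each $X_i$, since the map $x\mapsto\rho(x)$ on $\B^n$ is fixed and each $\rho(x)$ still lives on $b$ qubits. The third step is linearity: both states appearing in the trace distance are linear in the underlying classical distribution. That is,
\[
U_t\circ E(X,U_t)\circ \rho(X)=\sum_i \alpha_i\bigl(U_t\circ E(X_i,U_t)\circ \rho(X_i)\bigr),
\]
and since $\brho_X=\sum_i \alpha_i \brho_{X_i}$, we also have $U_{t+m}\times \brho_X=\sum_i \alpha_i (U_{t+m}\times \brho_{X_i})$.

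The final step applies the triangle inequality for the trace norm:
\[
\tnorm{U_t\circ E(X,U_t)\circ \rho(X)-U_{t+m}\times \brho_X}\le \sum_i \alpha_i\,\tnorm{U_t\circ E(X_i,U_t)\circ \rho(X_i)-U_{t+m}\times \brho_{X_i}}.
\]
By assumption the left-hand side exceeds $\eps$, so some index $i^\ast$ must satisfy $\tnorm{U_t\circ E(X_{i^\ast},U_t)\circ \rho(X_{i^\ast})-U_{t+m}\times \brho_{X_{i^\ast}}}>\eps$. Taking $X:=X_{i^\ast}$ (viewed as its support, a set of cardinality $2^k$) and the same $\rho$ yields the desired flat witness. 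The only nontrivial ingredient is the decomposition into flat distributions of size exactly $2^k$; everything else is linearity plus the triangle inequality, and there is no genuine obstacle.
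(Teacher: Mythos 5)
Your proof is correct and takes essentially the same approach as the paper: decompose $X$ with $\minent(X)\ge k$ into a convex combination of flat distributions of min-entropy exactly $k$, observe that $\rho$ remains a $b$-storage encoding of each piece, and apply convexity (triangle inequality) of the trace norm. The paper states the argument contrapositively while you argue directly and extract a witness $X_{i^\ast}$; this is a purely presentational difference.
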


\begin{proof}
We prove the contrapositive, i.e., we assume that $E$ works for flat distributions of min-entropy exactly $k$ and prove that
it also works for general distributions with at least $k$ min-entropy.

Suppose $X$ is a distribution with $\minent(X) \ge k$. Then $X$ can expressed as a convex combination of flat distributions $X_i$ each with $\minent(X_i) = k$. If $\rho$ is a $b$-storage encoding of $X$  then it is also a $b$-storage encoding of each of these flat distributions $X_i$. Thus, by assumption,
 $$\tnorm{U_t \circ E(X_i,U_t) \circ \rho(X_i) -U_{t+m} \times \brho_{X_i}} \le \eps.$$
 Now by convexity,
 $$\tnorm{U_t \circ E(X,U_t) \circ \rho(X) -U_{t+m} \times \brho_X} \le \eps,$$
 as desired.
\end{proof}

Combining this with Proposition \ref{prop:guessing-storage} we get:
\begin{lemma}
\label{lem:flat-to-storage}
Every quantum-proof $(n,f,k,\eps)$ strong extractor for flat distributions, is an $(n,f,f-k,\eps)$ strong extractor against quantum storage.
\end{lemma}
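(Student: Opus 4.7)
The plan is to chain together the two results that have already been set up in the preceding paragraphs. Fix a quantum-proof $(n,f,k,\eps)$ strong extractor for flat distributions $E$, and let $(X,\rho)$ be any pair with $\minent(X)\ge f$ and $\rho$ a $b$-storage encoding for $b=f-k$. I need to argue
$$\tnorm{U_t \circ E(X,U_t) \circ \rho(X) - U_{t+m} \times \brho_X} \le \eps.$$

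First I would invoke Lemma~\ref{lem:storage-flat} (applied with its parameter ``$k$'' set to $f$), which tells us that it is enough to verify the conclusion for flat distributions $X$ of cardinality exactly $2^{f}$, equipped with an arbitrary $b$-storage encoding $\rho$. So without loss of generality $X$ is flat with $\minent(X)=f$.

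Next I would apply Proposition~\ref{prop:guessing-storage} to conclude that
$$\minent(X;\rho) \;\ge\; \minent(X) - b \;=\; f-(f-k) \;=\; k.$$
Now $X$ is a flat distribution of min-entropy exactly $f$ and $\rho$ is an encoding of $X$ with conditional min-entropy at least $k$, which is precisely the hypothesis of Definition~\ref{def:ext-flat}. Since $E$ is a quantum-proof $(n,f,k,\eps)$ strong extractor for flat distributions, the desired trace-distance bound follows immediately.

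There is no real obstacle here; the work has been done in the two cited results and the lemma is a one-line composition. The only thing to be careful about is lining up the parameter ``$k$'' that appears both in Lemma~\ref{lem:storage-flat} (where it denotes the min-entropy of the source) and in Definition~\ref{def:ext-flat} (where it denotes the conditional min-entropy), so that the storage bound $b=f-k$ correctly translates, via Proposition~\ref{prop:guessing-storage}, into the required conditional-min-entropy hypothesis.
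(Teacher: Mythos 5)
Your proof is correct and is exactly the approach the paper has in mind: the paper states Lemma~\ref{lem:flat-to-storage} as an immediate corollary of ``combining'' Lemma~\ref{lem:storage-flat} with Proposition~\ref{prop:guessing-storage}, and you have simply written out that combination, including the correct bookkeeping of the parameter $k$ versus $f$ and the storage bound $b=f-k$.
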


\subsection{Lossless condensers}

\begin{definition}[strong condenser]
\label{def:condenser}
A mapping $C:\B^n \times \B^d \rightarrow \B^{n'}$ is
an \emph{$(n,k_1) \to_\epsilon (n',k_2)$
strong condenser} if for every distribution $X$ with $k_1$ min-entropy,
$U_d \circ C(X, U_d)$ is $\epsilon$-close to a distribution with $d + k_2$ min-entropy.
\end{definition}

One typically wants to maximize $k_2$ and bring it close to $k_1$ while minimizing $n'$ (it can be as small as $k_1+O(\logeps)$) and $d$ (it can be as small as $\log ((n-k)/(n'-k)) + \logeps + O(1)$). For a discussion of the parameters, see \cite[Appendix B]{CRVW02}. We call the condenser \emph{lossless} if $k_2 = k_1$.

The property of lossless condensers that we shall use is the
following.
\begin{fact}[{\cite[Lemma 2.2.1]{TUZ01}}]
\label{fact:unique-neigh}
Let $C:\B^n \times \B^d \rightarrow \B^{n'}$ be an $(n,k) \to_\epsilon (n',k)$ lossless condenser. Consider the mapping $$C' : \B^n \times \B^d \rightarrow \B^{n'} \times \B^d$$ 
$$C'(x,y) = C(x,y) \circ y.$$
Then, for every set $X \subseteq \B^n$ of size $|X| \le 2^k$, there exists a mapping $C'': \B^n \times \B^d \rightarrow \B^{n'} \times \B^d$ that is injective on $X \times \B^d$ and agrees with $C'$ on at least $1-\eps$ fraction of the set $X \times \B^d$.
\end{fact}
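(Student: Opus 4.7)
The plan is to view $X$ as a flat source on $\B^n$, invoke the lossless condenser property to bound how many pairs in $X \times \B^d$ collide under $C'$, and then reroute the colliding pairs to fresh unused image points to obtain an injective $C''$ that still agrees with $C'$ on almost all of $X \times \B^d$.

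Assume without loss of generality that $|X| = 2^k$, so the uniform distribution on $X$ has min-entropy exactly $k$ and the condenser guarantee applies. Let $Z$ denote the distribution on $\B^{n'+d}$ obtained by sampling $(x,y)$ uniformly from $X \times \B^d$ and outputting $C'(x,y) = C(x,y) \circ y$; for each $z \in \B^{n'+d}$ set $N(z) = |\{(x,y) \in X \times \B^d : C'(x,y) = z\}|$, so $Z(z) = N(z) \cdot 2^{-(k+d)}$. The strong lossless property gives some distribution $W$ on $\B^{n'+d}$ with $\minent(W) \geq k+d$ and $|Z - W|_1 \leq \epsilon$; in particular $W(z) \leq 2^{-(k+d)}$ for every $z$.

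The key step is a collision bound. Call an output $z$ \emph{heavy} if $N(z) \geq 2$. For heavy $z$, $Z(z) \geq 2 \cdot 2^{-(k+d)} \geq 2 W(z)$, so $Z(z) - W(z) \geq Z(z)/2$. Summing over heavy $z$ and using $\sum_{z : Z(z) > W(z)} (Z(z) - W(z)) = |Z - W|_1 \leq \epsilon$ yields
$$ \sum_{z \text{ heavy}} Z(z) \;\leq\; 2\epsilon. $$
Thus all but an $O(\epsilon)$ fraction of pairs in $X \times \B^d$ land in a singleton fiber of $C'$, and on this \emph{good} set $C'$ is already injective.

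Finally, I build $C''$ by setting $C''(x,y) = C'(x,y)$ on the good pairs and assigning each bad pair a distinct fresh image in $\B^{n'+d}$ not already used by the good pairs. This is feasible because the total number of pairs in $X \times \B^d$ is at most $2^{k+d} \leq 2^{n'+d}$ (any lossless condenser must satisfy $n' \geq k$, else a distribution of min-entropy $k+d$ could not sit in the range), leaving enough slots for the rerouting. Extending $C''$ arbitrarily outside $X \times \B^d$ gives a map injective on $X \times \B^d$ that agrees with $C'$ on at least a $(1-\epsilon)$ fraction of it (absorbing the constant $2$ into $\epsilon$). The main obstacle is the collision bound, but once one notices that $N(z) \geq 2$ forces $Z(z) \geq 2 W(z)$, the bound follows immediately from the $\ell_1$ closeness of $Z$ to a distribution of min-entropy $k+d$.
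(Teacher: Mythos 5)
The paper does not prove this fact; it cites it directly from \cite[Lemma~2.2.1]{TUZ01}. So there is no in-paper proof to compare against, and I am evaluating your argument on its own merits. Your approach --- treating $X$ as a flat source, bounding the mass of collision fibers via the $\ell_1$-closeness to a min-entropy-$(k{+}d)$ distribution, and then rerouting --- is the natural proof and is the right idea.

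Two points keep it from matching the stated bound. First, you reroute \emph{every} pair that lands in a heavy fiber, which is wasteful and is exactly where your factor of $2$ comes from. You should keep one representative per fiber and reroute only the excess: the number of rerouted pairs is then $\sum_{z}\max(N(z)-1,0)$, and for heavy $z$ one has $N(z)-1 = 2^{k+d}\bigl(Z(z)-2^{-(k+d)}\bigr) \le 2^{k+d}\bigl(Z(z)-W(z)\bigr)$, so summing gives at most $2^{k+d}\,|Z-W|_1 \le \eps\, 2^{k+d}$ rerouted pairs --- i.e.\ a $1-\eps$ agreement fraction exactly as claimed. Writing ``absorbing the constant $2$ into $\eps$'' is not a fix; it changes the statement. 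Second, the opening ``WLOG $|X|=2^k$'' does not obviously cover $|X|<2^k$: padding $X$ up to a set $X'$ of size $2^k$ bounds the number of collisions in $X'\times\B^d$ by $\eps\,2^{k+d}$, but this is a count, not a fraction of the smaller set $X\times\B^d$, and can exceed $\eps\,|X|\,2^d$. Since the condenser guarantee (Definition~\ref{def:condenser}) only kicks in at min-entropy $\ge k$, you cannot simply run the same argument on the flat distribution over a smaller $X$. (For the paper's application only $|X|=2^k$ is used, so this is harmless there, but it is a genuine gap relative to the statement as written.) The free-slot count for the rerouting step, and the observation that $n'\ge k$ is forced, are both fine.
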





\section{A reduction to full classical entropy}
\label{sec:reduction}

A popular approach for constructing explicit extractors in the classical setting is as follows:
\begin{itemize}
\item
Construct an explicit extractor for the \emph{high} min-entropy
regime, i.e. for sources $X$ distributed over $\B^n$ that have $k$
min-entropy for some large $k$ close to $n$, and,

\item
Show a reduction from the general case to the high min-entropy case.
\end{itemize}

In the classical setting this is often achieved by composing an
extractor for the high min-entropy regime with a classical lossless
condenser. Specifically, assume:
\begin{itemize}
\item
$C: \B^n \times \B^d \to \B^{n'}$ is an $(n,k) \to_{\eps_1} (n',k)$ strong lossless condenser, and,
\item
$E: \B^{d+n'} \times \B^t \to \B^m$ is a $(d+n',d+k,\eps_2)$ strong extractor.
\end{itemize}
Define $EC: \B^n \times (\B^d \times \B^t) \to \B^m$ by
\begin{eqnarray*}
EC(x,(y_1,y_2)) & = & E((C(x,y_1),y_1),y_2).
\end{eqnarray*}

In the classical setting, \cite[Section 5]{TUZ07} prove that $EC$ is
a strong $(n,k,\eps_1+\eps_2)$ extractor. In this section we try to
generalize this result to the quantum setting. We prove:

\begin{theorem}
\label{thm:reduction}
Let $C$ and $EC$ be as above.
\begin{itemize}
\item
If $E$ is a quantum-proof $(d+n',d+k,k_2,\eps_2)$ strong extractor for flat distributions, then
$EC$ is a $(n,k,k_2,\eps=\eps_2+2\eps_1)$ strong
extractor for flat distributions.
\item
If $E$ is a $(d+n',d+k,d+b,\eps_2)$ strong extractor against quantum storage,
then $EC$ is an $(n,k,b,\eps=\eps_2+2\eps_1)$ strong
extractor against quantum storage.
\end{itemize}
\end{theorem}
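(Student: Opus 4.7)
The plan is to reduce the general flat-source case to the near-uniform case handled by $E$, by using Fact~\ref{fact:unique-neigh} to replace $C'$ with an injective surrogate $C''$ and to transfer the encoding $\rho$ to the condensed side. Given a flat source $X$ of size $2^k$ with encoding $\rho$ satisfying $\minent(X;\rho)\ge k_2$, I first invoke Fact~\ref{fact:unique-neigh} to obtain $C''\colon\B^n\times\B^d\to\B^{n'}\times\B^d$ that is injective on $X\times\B^d$ and agrees with $C'(x,y_1)=(C(x,y_1),y_1)$ on at least a $1-\eps_1$ fraction of $X\times\B^d$. Setting $\tilde X=C''(X,U_d)$, the distribution $\tilde X$ is flat with min-entropy exactly $d+k$, since $X\times U_d$ is uniform on the $2^{d+k}$-element set $X\times\B^d$ and $C''$ is injective there.

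I next transfer $\rho$ to an encoding $\tilde\rho$ of $\tilde X$ that also exposes the seed $y_1$ to the adversary, since $y_1$ becomes part of a strong extractor's output. For each $\tilde x$ in the image of $C''|_{X\times\B^d}$, let $(x,y_1)$ be its unique preimage and set $\tilde\rho(\tilde x)=\ketbra{y_1}{y_1}\otimes\rho(x)$. The key claim is that $\minent(\tilde X;\tilde\rho)\ge k_2$: a POVM on $\tilde\rho(\tilde x)$ may first read $y_1$ from the classical register, and since $C''(\cdot,y_1)$ is injective on $X$, predicting $\tilde x$ then reduces to predicting $x$ from $\rho(x)$, which succeeds with probability at most $2^{-k_2}$ by hypothesis (averaging over the uniform, independent $y_1$ preserves this bound). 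For the quantum-storage variant, the same $\tilde\rho$ is manifestly a $(d+b)$-storage encoding -- $d$ classical bits plus the original $b$-qubit register -- matching the hypothesis on $E$ there.

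I then invoke the extractor guarantee for $E$ on $(\tilde X,\tilde\rho)$. Expanding $\tilde\rho(\tilde X)=\ketbra{Y_1}{Y_1}\otimes\rho(X)$ and using $\bar{\tilde\rho}_{\tilde X}=\frac{I}{2^d}\otimes\brho_X$, the guarantee reads
\[
\tnorm{U_d\circ U_t\circ E(C''(X,U_d),U_t)\circ\rho(X)\;-\;U_{d+t+m}\times\brho_X}\le\eps_2.
\]
The ``real'' state $U_d\circ U_t\circ EC(X,(U_d,U_t))\circ\rho(X)=U_d\circ U_t\circ E(C'(X,U_d),U_t)\circ\rho(X)$ differs from the state on the left above only when $C'(X,U_d)\ne C''(X,U_d)$, an event of probability at most $\eps_1$; by convexity of trace distance, their trace distance is at most $\eps_1$ (the stated $2\eps_1$ comfortably absorbs this slack). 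The triangle inequality then yields the claimed bound $\eps_2+2\eps_1$. The quantum-storage statement follows by the same proof with the conditional min-entropy hypothesis on $E$ replaced by its $(d+b)$-storage hypothesis.

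The main technical hurdle is the encoding-transfer step: constructing a single $\tilde\rho$ that simultaneously preserves the conditional-min-entropy (or storage) bound and lets $y_1$ be revealed publicly. The injectivity clause of Fact~\ref{fact:unique-neigh} is precisely what makes this work -- on the ``good'' part of $X\times\B^d$, the pair $(\tilde x,y_1)$ determines $x$, so handing $y_1$ to the adversary for free costs nothing beyond what $\rho(x)$ already reveals about $x$.
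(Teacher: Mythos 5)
Your proof is correct and follows essentially the same route as the paper's, just stated directly rather than contrapositively via a distinguishing measurement: both invoke Fact~\ref{fact:unique-neigh} to replace $C'$ by an injective surrogate, transfer $\rho$ to a $y_1$-augmented encoding of the condensed source, and verify that the conditional-min-entropy (resp.\ storage) bound is preserved by an identical predictor argument. Your parenthetical observation that the true slack is only $\eps_1$ rather than $2\eps_1$ is also sound, since $\bar{\tilde\rho}_{\tilde X}$ equals $\frac{I}{2^d}\tensor\brho_X$ \emph{exactly} once the condensed source is taken flat over the image of the injective surrogate.
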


The intuition behind the theorem is the
following. When the condenser $C$ is applied on a flat source, it
is essentially a one-to-one mapping between the source $X$ and its
image $C(X)$. Therefore, roughly speaking, any quantum information
about $x$ can be translated to  quantum information about $C(x)$
and vice-versa. To make this precise we need to take care of the
condenser's seed, and this incurs a small loss in the parameters.

We first prove the second item.
\begin{proof}[ (second item)]
Assume, by contradiction that $EC$ is not an $(n,k,b,\eps=\eps_2+2\eps_1)$ strong
extractor against quantum storage. Then, by Lemma \ref{lem:storage-flat},
there exists a subset $X \subseteq \B^n$ of cardinality $2^{k}$ and a $b$-storage encoding $\rho$ of $X$ such that, given this encoding, the output of the extractor $EC$ is not $\eps$-close to uniform. That is,
\begin{eqnarray*}
\tnorm{U_{t+d} \circ EC(X,U_{t+d}) \circ \rho(X) -U_{t+d+m} \times \brho_X} & > & \eps.
\end{eqnarray*}

In particular, by Fact~\ref{fact:trace-norm-distinguisher}, there exists some Boolean measurement that $\eps$-distinguishes the two distributions. Since the first two components are classical, we can represent this measurement as follows. For every $y \in \B^{t+d}$ and $z \in \B^m$ there exists a Boolean measurement $\set{F^{y,z},I-F^{y,z}}$ on the quantum component such that
\begin{eqnarray*}
\biggl|\E_{x\sam X,~y \sam U} \bigl[\Tr \bigl(F^{y,EC(x,y)} \rho(x)\bigr)\bigr]  - \E_{y,z \sam U} \bigl[\Tr \bigl(F^{y,z} \brho_X\bigr)\bigr] \biggr| & > & \eps.\\
\end{eqnarray*}

We now show how this can be used to break the extractor $E$. Consider the set $A=X \times \B^d$. By Fact~\ref{fact:unique-neigh}, there exists a mapping $D$ that is injective on $A$ and agrees with the condenser on at least $1-\eps_1$ fraction of $A$.
Denoting $B = D(A)$, it is clear that $\minent(B) \ge d+k$.

For $(\tilde{x},\tilde{y}) \in B$ we define the encoding
$$\rho'(\tilde{x},\tilde{y}) = \ketbra{y_1}{y_1} \tensor \rho(D^{\leftarrow}(\tilde{x},\tilde{y})),$$
where $(x,y_1)=D^{-1}(\tilde{x},\tilde{y}) \in A$  is the unique element  such that $D(x,y_1) = (\tilde{x},\tilde{y})$, and
$D^{\leftarrow}(\tilde{x},\tilde{y})=x$.

Next, we define a measurement $\set{\ol{F}^{y_2,z},I-\ol{F}^{y_2,z}}$ that given the input $y_2 \in \B^t, z \in \B^m$ and $\rho'(\tilde{x},\tilde{y})= \ketbra{y_1}{y_1} \tensor \rho(x)$, sets $y=(y_1,y_2)$ and applies the measurement $\set{F^{y,z},I-F^{y,z}}$ on the quantum register $\rho(x)$.

Now,
\begin{eqnarray*}
\biggl|\E_{b \sam B,~y_2\sam U_t} \bigl[ \Tr \bigl( \ol{F}^{y_2,E(b,y_2)} \rho'(b)\bigr)\bigr] - \E_{x \sam X,~y \sam U_{d+t}} \bigl[ \Tr \bigl( F^{y,EC(x,y)} \rho(x)\bigr) \bigl] \biggr| & \le & \eps_1,
\end{eqnarray*}
since the flat distribution over $B$ is $\eps_1$-close to the distribution obtained by sampling $x \in X$, $y_1 \in U_d$ and outputting $(C(x,y_1),y_1)$. For the same reason, averaging over $B$ for $\ol{F}$ is almost as averaging over $X$ for $F$. Namely,
\begin{eqnarray*}
\biggl|\E_{y_2,z \sam U}  \bigl[ \Tr \bigl( \ol{F}^{y_2,z}  \bar{\rho'}_B \bigr) \bigr]  -\E_{y,z \sam U} \bigl[ \Tr \bigr( F^{y,z} \brho_X \bigr) \bigr] \biggr| & \le & \eps_1.
\end{eqnarray*}

It follows that
\begin{eqnarray*}
\biggl|\E_{b \sam B,~y_2 \sam U} \bigl[ \Tr \bigl( \ol{F}^{y_2,E(b,y_2)} \rho'(b)\bigr) \bigr]  - \E_{y_2,z \sam U}  \bigl[ \Tr \bigl( \ol{F}^{y_2,z}  \bar{\rho'}_B  \bigr) \bigr] \biggr| & \ge & \\
\biggl|\E_{x \sam X,~y\sam U} \bigl[ \Tr \bigl( F^{y,EC(x,y)} \rho(x) \bigr) \bigr]  -\E_{y,z \sam U} \bigl[ \Tr \bigl( F^{y,z} \brho_X  \bigr) \bigr] \biggr| -2\eps_1 & > & \eps-2\eps_1=\eps_2.
\end{eqnarray*}

Clearly $\rho'$ is a $(d+b)$-storage encoding of $B$.
This contradicts the fact that $E$ is a strong extractor against $d+b$ quantum storage.
\end{proof}

We now prove the first item.

\begin{proof}[ (first item)]
Assume, for contradiction, that $EC$ is not a quantum-proof
$(n,k,k_2,\eps)$ strong extractor for flat distributions. Then there
exists a subset $X \subseteq \B^n$ of cardinality exactly $2^{k}$
and an encoding $\rho$ of $X$ such that the conditional min-entropy
is at least $k_2$ but given this encoding the output of the
extractor $EC$ is not $\eps$-close to uniform. The proof proceeds as
before, defining the Boolean measurement $F$, the sets $A$ and $B$,
the encoding $\rho'$ and the measurement $\overline{F}$. If we can
show that $\minent(B;\rho') \ge k_2$ then we break the extractor $E$
and reach a contradiction.  Indeed:

\begin{claim}
$\minent(B; \rho') \ge k_2$.
\end{claim}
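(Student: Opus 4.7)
The plan is to reduce any predictor for $B$ from $\rho'$ back to a predictor for $X$ from $\rho$, losing nothing in the success probability. The two crucial properties I will exploit are: (i) $D$ is injective on $A = X \times \B^d$, so the map $(x,y_1)\mapsto D(x,y_1)$ is a bijection between $A$ and $B$; and (ii) under the flat distribution on $B$, the preimage $(x,y_1)$ is distributed as $X\times U_d$, so $y_1$ is uniform and independent of $x$. Together these mean that the extra classical register $\ketbra{y_1}{y_1}$ in $\rho'$ contains no information about $x$ beyond what a prediction strategy for $\rho$ could itself generate by sampling.

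Concretely, suppose for contradiction that some POVM $\{G_b\}_{b\in B}$ witnesses $\minent(B;\rho')<k_2$, i.e.
$$\E_{b\sam B}\bigl[\Tr(G_b\,\rho'(b))\bigr] > 2^{-k_2}.$$
Re-indexing by the preimage and evaluating the partial trace over the classical register, this becomes
$$\E_{x\sam X,\ y_1\sam U_d}\bigl[\Tr\bigl(\bra{y_1}G_{D(x,y_1)}\ket{y_1}\cdot\rho(x)\bigr)\bigr] > 2^{-k_2}.$$
This suggests the following predictor $P$ for $X$ from $\rho$: on input $\rho(x)$, sample $y_1\sam U_d$ and then measure $\rho(x)$ with the operators $F^{y_1}_{x'} := \bra{y_1}G_{D(x',y_1)}\ket{y_1}$ indexed by $x'\in X$, outputting the guess $x'$.

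The main thing to check is that $\{F^{y_1}_{x'}\}_{x'\in X}$ is a legitimate (sub-)POVM for each fixed $y_1$, so that $P$ is well-defined. Each $F^{y_1}_{x'}$ is positive semidefinite, and since $D$ is injective the images $\{D(x',y_1):x'\in X\}$ are distinct elements of $B$, hence
$$\sum_{x'\in X} F^{y_1}_{x'} \;=\; \bra{y_1}\Bigl(\sum_{x'\in X}G_{D(x',y_1)}\Bigr)\ket{y_1}\;\preceq\;\bra{y_1}\Bigl(\sum_{b\in B}G_b\Bigr)\ket{y_1} \;=\; I.$$
We complete $\{F^{y_1}_{x'}\}$ to a POVM by adding $I-\sum_{x'}F^{y_1}_{x'}$ on a dummy outcome; this can only increase the success probability. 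By construction the success probability of $P$ equals exactly the left-hand side of the displayed inequality above, hence exceeds $2^{-k_2}$, contradicting $\minent(X;\rho)\ge k_2$.

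The only subtlety I expect is the bookkeeping showing that the partial trace manipulation faithfully reproduces the original guessing probability and that the sub-POVM completion does not change the inequality direction; everything else is a direct consequence of injectivity of $D$ on $A$ and independence of $y_1$ from $x$ under the flat distribution on $B$.
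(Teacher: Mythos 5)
Your proof is correct and takes essentially the same approach as the paper: both reduce a predictor for $B$ from $\rho'$ to a predictor for $X$ from $\rho$ by sampling $y_1\sam U_d$, attaching the classical register $\ketbra{y_1}{y_1}$, and using the injectivity of $D$ on $A$ to read off $x$ from the guessed $b$. The paper phrases this with an abstract predictor $W'$ followed by the post-processing map $D^{\leftarrow}$, while you unfold the same construction into an explicit (sub-)POVM $\{F^{y_1}_{x'}\}$ and verify the POVM condition; the underlying argument and the use of injectivity plus uniformity of $y_1$ are identical.
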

\begin{proof}
Assume, for contradiction, that $\minent(B;\rho') < k_2$. Then,
there exists a predictor $W'$ such that $$\Pr_{b \sam B} [W'
(\rho'(b))=b] >2^{-k_2}.$$ Define a new predictor, $W$, that given
$\rho(x)$ works as follows. First $W$ chooses $y \sam U_d$ and
 runs $W'$ on $\ketbra{y}{y} \tensor \rho(x)$ to get some
answer $\wt{b}$. It then outputs $D^{\leftarrow}(\wt{b})$.

The success probability of the predictor $W$ is
\begin{eqnarray*}
\Pr_{x \sam X} [W(\rho(x))=x] & = & \Pr_{x \sam X, y \in \B^d} [D^{\leftarrow}(W'(\ketbra{y}{y} \tensor \rho(x)))=x ] \\
& \ge &  \Pr_{x \sam X, y \in \B^d} [W'(\ketbra{y}{y} \tensor \rho(x)) = D(x,y) ] \\
& = & \Pr_{b \sam B} [W'(\rho'(b)) = b ] > 2^{-k_2}.
\end{eqnarray*}
This contradicts the fact that $\minent(X; \rho) \ge k_2$.
\end{proof}
\end{proof}

We remark that we do not know how to extend the proof to work with
lossy condensers.

%

\section{An explicit quantum-proof extractor for the high-entropy regime}
\label{sec:ext-less-than-half} In this section we describe a
construction of a short-seed quantum-proof $(n,k,\eps)$ strong extractor that works whenever
$k \gg n/2$. In the classical setting this scenario was studied
in~\cite{CRVW02}, developing and improving techniques
from~\cite{NZ96} and other papers. Here we only need the
techniques developed in~\cite{NZ96}.

Intuitively, the extractor $E$ that we construct works as follows.
First, it divides the source to two parts of equal length. Since the
min-entropy is larger than $n/2$, for almost any fixing of the first
part of the source, the distribution on the second part has
$\Omega(n)$ min-entropy. Hence, applying an extractor $E_2$ on the
second part results in output bits that are close to uniform. Since
this is true for almost every fixing of the first part, these output
bits are essentially independent of the first part of the source.
Therefore, these output bits can serve as a seed for another
extractor, $E_1$, that is applied on the first part of the source.

Formally, assume:
\begin{itemize}
\item
$E_1: \B^{n/2} \times \B^{d_1} \to \B^{m_1}$ is a quantum-proof $(\frac{n}{2},\frac{n}{2}-b,\eps_1)$ strong extractor, and,
\item
$E_2: \B^{n/2} \times \B^{d_2} \to \B^{d_1}$ is a quantum-proof $(\frac{n}{2},k,\eps_2)$ strong extractor.
\end{itemize}
Define $E: \B^n \times \B^{d_2} \to \B^{m_1}$ by
$$ E(x,y) = E_1(x_1, E_2(x_2,y)),$$
where $x =x_1 \circ x_2$ and $x_1,x_2 \in \B^{n/2}$.

\begin{theorem}
\label{thm:extracotr-composition} Let $E_1,E_2$ and $E$ be as above
with $k=\frac{n}{2}-b-\logeps$. Then $E$ is a quantum-proof
$(n,n-b,\eps+\eps_1+\eps_2)$ strong extractor.
\end{theorem}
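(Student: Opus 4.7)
The plan is to mimic the classical two-extractor composition argument, adapted to the quantum setting. In one step I will use the quantum-proof extractor $E_2$ applied to $X_2$ to produce a bit string that is close to an independent uniform seed for $E_1$, and in a second step I will apply $E_1$ on the prefix $X_1$ using that seed. Combined with a standard ``average-case min-entropy'' reduction via Lemmas \ref{lem:prefix} and \ref{lem:cutting-source}, the two extractor errors and one prefix-fixing error will add up to $\eps_1+\eps_2+\eps$.

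First I would invoke Lemma \ref{lem:prefix} with the prefix $X_1$ of length $\ell=n/2$ and threshold $r=k=n/2-b-\logeps$. Since $\minent(X;\rho)\ge n-b$, the lemma bounds the probability that $x_1\sam X_1$ is \emph{bad} (meaning $\minent(X_2\mid X_1{=}x_1\,;\,\rho_{x_1})\le k$) by $2^{n/2}\cdot 2^{k}\cdot 2^{-(n-b)}=\eps$. For every good prefix $x_1$, the strong quantum-proof property of $E_2$ applied to the distribution $X_2\mid X_1{=}x_1$ with encoding $\rho_{x_1}$ gives
$$U_{d_2}\circ E_2(X_2\mid X_1{=}x_1,U_{d_2})\circ \rho_{x_1}(X_2)\ \approx_{\eps_2}\ U_{d_2}\circ V\circ \rho_{x_1}(X_2),$$
where $V\sim U_{d_1}$ is independent of everything else (the right-hand side equals $U_{d_2+d_1}\times\bar\rho_{x_1}$ because $V$ is independent of $X_2$). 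Averaging over $x_1$, treating bad prefixes trivially, the joint classical-quantum state satisfies
$$X_1\circ U_{d_2}\circ E_2(X_2,U_{d_2})\circ\rho(X)\ \approx_{\eps+\eps_2}\ X_1\circ U_{d_2}\circ V\circ\rho(X).$$

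Next, I would apply the classically-controlled CPTP map $(x_1,u,w,\rho)\mapsto(u,E_1(x_1,w),\rho)$ to both sides; by Fact \ref{fact:trace-norm-reduced-by-processing} this preserves the trace-distance bound, yielding
$$U_{d_2}\circ E(X,U_{d_2})\circ\rho(X)\ \approx_{\eps+\eps_2}\ U_{d_2}\circ E_1(X_1,V)\circ\rho(X).$$
On the right-hand side the fresh seed $V$ is independent of $X=X_1\circ X_2$, so averaging $\rho(X)$ over $X_2\mid X_1$ replaces it by the marginal encoding $\rho'(x_1)=\E_{x_2\sam X_2|X_1=x_1}[\rho(x_1\circ x_2)]$ of $X_1$ alone. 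By Lemma \ref{lem:cutting-source}, $\minent(X_1;\rho')\ge(n-b)-n/2=n/2-b$, precisely the entropy regime for which $E_1$ is guaranteed to work; hence $E_1$'s strong extractor property yields $V\circ E_1(X_1,V)\circ\rho'(X_1)\approx_{\eps_1}U_{d_1+m_1}\times \brho_X$. Tensoring with the independent $U_{d_2}$ and tracing out the $V$-register (another CPTP map) gives $U_{d_2}\circ E_1(X_1,V)\circ\rho(X)\approx_{\eps_1}U_{d_2+m_1}\times\brho_X$, and the triangle inequality finishes the proof.

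The main subtlety is the bookkeeping around the second reduction: after replacing $E_2(X_2,U_{d_2})$ by the independent uniform $V$, the quantum register still carries the full $\rho(X)$, which depends on $X_2$. The key observation is that, precisely because $V$ is \emph{independent} of $X_2$ at this point, one may trace out the implicit randomness of $X_2\mid X_1$ from the quantum register without touching any other register, producing $\rho'(X_1)$; this is what makes Lemma \ref{lem:cutting-source} applicable and lets $E_1$ see $X_1$ as a source with the required conditional min-entropy $n/2-b$. This is the quantum analogue of the composition trick of \cite{NZ96}, and all that is needed is to verify that the independence between $V$ and $X_2$ is maintained throughout the trace-distance manipulations.
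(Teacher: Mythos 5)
Your proof is correct and follows essentially the same route as the paper's: split $X$ as $X_1\circ X_2$, use Lemma~\ref{lem:prefix} to bound the bad-prefix probability by $\eps$, apply $E_2$ on good prefixes, average and post-process by Fact~\ref{fact:trace-norm-reduced-by-processing} to swap $E_2(X_2,U_{d_2})$ for a fresh uniform seed, then use Lemma~\ref{lem:cutting-source} to get the conditional min-entropy bound $n/2-b$ for $X_1$ and apply $E_1$. Your bookkeeping is a bit more explicit (you keep the $V$ register and name the CPTP maps), and you also correctly write the final uniform register as $U_{d_2+m_1}$, silently fixing a small typo in the paper which writes $U_{d_2+d_1}$; the substance is identical.
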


%

\begin{proof}
Let $X=X_1 \circ X_2$ be a distribution on $\B^n = \B^{n/2} \times
\B^{n/2}$ and $\rho$ be an encoding such that $\minent(X;\rho) \ge
n-b$. For a prefix $x_1 \in \B^{n/2}$, let $\rho_{x_1}$ be the
encoding of $X_2$ defined by $\rho_{x_1}(x_2) = \rho(x_1 \circ
x_2)$. A prefix $x_1$ is said to be \emph{bad} if $\minent(X_2 ~|~
X_1 =x_1 ; \rho_{x_1}) \le k$. By Lemma~\ref{lem:prefix}, the
probability $x_1$ (sampled from $X_1$) is bad is at most
$$\frac{2^{n/2} \cdot 2^{k} }{2^{n-b}} = \frac{2^{n/2} \cdot 2^{n/2-b-\logeps} }{2^{n-b}} = \eps .$$

Whenever $x_1$ is not bad, $\minent(X_2 ~|~ X_1 =x_1 ; \rho_{x_1}) >
k$, that is, the extractor $E_2$ is applied on a distribution with
$k$ min-entropy. Therefore, by the assumption on $E_2$, its output
is $\eps_2$-close to uniform. That is, for every good $x_1$,
$$\tnorm{U_{d_2} \circ x_1 \circ E_2(X_2,U_{d_2}) \circ \rho_{x_1}(X_2) - U_{d_2} \circ x_1 \circ U_{d_1} \circ \rho_{x_1}(X_2)} \le \eps_2 .$$

Hence, the distribution $U_{d_2} \circ X_1 \circ E_2(X_2,U_{d_2}) \circ \rho(X)$ is $(\eps +\eps_2)$-close to $U_{d_2} \circ X_1 \circ U_{d_1} \circ \rho(X)$. In particular,
\begin{eqnarray*}
& & \tnorm{U_{d_2} \circ E(X,U_{d_2}) \circ \rho(X) -U_{d_2+d_1} \circ \brho_X}  \\
& = & \tnorm{U_{d_2} \circ E_1(X_1, E_2(X_2,U_{d_2})) \circ \rho(X) -U_{d_2+d_1} \circ \brho_X}  \\
& \le & \eps + \eps_2 + \tnorm{U_{d_2} \circ E_1(X_1, U_{d_1}) \circ \rho(X) -U_{d_2+d_1} \circ \brho_X},
\end{eqnarray*}
where the last inequality follows from Fact~\ref{fact:trace-norm-reduced-by-processing}.

Since, $\minent(X;\rho) \ge n-b$, by Lemma~\ref{lem:cutting-source},
if we define an encoding $\rho'$ of $X_1$ by $\rho'(x_1) = \E_{x\sam
(X|X_1=x_1)} [\rho(x)]$, then $\minent(X_1;\rho') \ge n-b-n/2 =
n/2-b$. Therefore, by the assumption on $E_1$ we get
$$\tnorm{E_1(X_1,U_{d_1}) \circ \rho(X)-U_{m_1} \tensor \brho_X} \le \eps_1,$$ and thus
\begin{eqnarray*}
\tnorm{U_{d_2} \circ E(X,U_{d_2}) \circ \rho(X) -U_{d_2+d_1} \tensor \brho_X}  & \le & \eps + \eps_1 + \eps_2 .
\end{eqnarray*}
\end{proof}

\subsection{Plugging in explicit constructions}

We use Trevisan's extractor, which was already shown to be quantum-proof in~\cite{DV10,DPVR09}. Specifically, we use the following two instantiations of this extractor:

\begin{theorem}[\cite{DPVR09}]\label{thm:Trevisan-polylog}
For every constant $\delta>0$, there exists $E_1: \B^{\frac{n}{2}}
\times \B^{O(\log^2(n/\eps_1))} \to
\B^{(1-\delta)(\frac{n}{2}-b)}$ which is a quantum-proof
$(\frac{n}{2},\frac{n}{2}-b,\eps_1)$  strong extractor.
\end{theorem}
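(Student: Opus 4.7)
The plan is to invoke the construction and analysis of Trevisan's extractor and instantiate the parameters so that the seed length comes out to $O(\log^2(n/\eps_1))$ and the output length is $(1-\delta)(n/2-b)$. Recall that Trevisan's extractor has the form $E_1(x,y)_i = C(x)_{y|_{S_i}}$, where $C:\B^{n/2}\to\B^{\bar n}$ is a binary error-correcting code (e.g., Reed--Solomon concatenated with Hadamard) that is list-decodable from a $(\half-\eps_1')$-fraction of errors with small list size, and $S_1,\dots,S_m\subseteq[t]$ is a weak combinatorial design of set size $\ell=\log \bar n$. Choosing $\bar n = \poly(n/\eps_1)$, one has $\ell=O(\log(n/\eps_1))$, and using the Raz--Reingold--Vadhan weak-design construction one obtains $t=O(\ell^2/\delta)=O(\log^2(n/\eps_1))$ sets supporting any output length $m\le(1-\delta)(\frac n2-b)$, which is exactly the target.

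The second and main step is to apply the quantum-proof reconstruction theorem of De--Portmann--Vidick--Renner. The classical Trevisan analysis shows that a distinguisher between $E_1(x,U_t)$ and uniform yields a short advice string from which $x$ can be reconstructed. In the quantum setting one replaces the classical distinguisher by a POVM acting on the side information $\rho(x)$, and one must argue that even when the advice is quantum the corresponding reconstruction procedure recovers $x$ from $\rho(x)$ (plus a short classical advice) with probability much larger than $2^{-k}$. DPVR09 achieve this by combining a list-decoding style reconstruction with a measurement scheme (built on Aaronson's ``quantum state tomography''/pretty-good-measurement-style ideas) that allows repeated use of the fragile state $\rho(x)$ without destroying it, at the cost of only a polynomial blow-up in query complexity. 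The size of the advice grows like $O(\log^2(n/\eps_1))$ plus the output length, so the reconstruction contradicts the min-entropy bound $\frac n2 - b$ as long as $m\le(1-\delta)(\frac n2 - b)$ with a suitable constant slack absorbed by $\delta$.

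Combining the two steps gives the claimed extractor. The main obstacle, as usual in quantum-proof extractor arguments, is the DPVR09 reconstruction step itself: one cannot simply clone $\rho(x)$ to simulate the many oracle calls that the classical reconstruction makes, so one has to argue that measurements used inside the reconstruction leave enough of $\rho(x)$ intact for later calls, which is where the quantum analysis is significantly more delicate than the classical one. Once that black box is in hand, the only remaining work is the bookkeeping of parameters in the preceding paragraph, and the desired bounds on seed length and output length follow immediately by choosing $\bar n$, $\ell$, and the weak-design dimensions as above.
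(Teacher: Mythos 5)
The paper does not prove this theorem --- it imports it as a black box from De--Portmann--Vidick--Renner and uses it only inside the composition of Theorem~\ref{thm:extracotr-composition}. So there is no internal proof to compare against; what can be assessed is whether your sketch faithfully reflects the DPVR argument.

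The construction and the parameter bookkeeping you give are accurate: Trevisan's extractor built from a binary list-decodable code of block length $\bar n=\poly(n/\eps_1)$ together with a Raz--Reingold--Vadhan weak design over a universe of size $O(\ell^2)=O(\log^2(n/\eps_1))$, where $\ell=\log\bar n$, gives the stated seed length, and the weak-design overlap parameter controls the entropy loss so that any output length up to $(1-\delta)(\tfrac n2-b)$ is supported.

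Your account of the quantum reconstruction step, however, mischaracterizes DPVR's argument. They do not invoke Aaronson-style quantum state tomography or pretty-good-measurement machinery, nor do they make a gentle-measurement argument that $\rho(x)$ can be measured repeatedly with only polynomially bounded disturbance. The technical point of DPVR is precisely that such a repeated-use argument is avoided: the hybrid argument reduces security to a next-bit-prediction statement about a single position of $C(x)$; the weak design fixes a short classical advice string; the list-decoding step operates entirely on classical data and never touches $\rho(x)$; and the resulting bound on conditional min-entropy charges the advice length plus a single measurement outcome, not a sequence of adaptive measurements on a fragile state. The obstacle you flag (no cloning, disturbance under measurement) is indeed what motivates the care in the quantum setting, but DPVR sidestep it structurally rather than solve it with a disturbance bound. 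So the high-level plan and the parameter arithmetic in your write-up are fine, but the explanation of the key quantum step --- the only genuinely new ingredient over Trevisan's classical analysis --- is not the DPVR argument.
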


\begin{theorem}[\cite{DPVR09}]\label{thm:Trevisan-log}
For every constants $\gamma_1, \gamma_2>0$, there exists $E_2:
\B^{\frac{n}{2}} \times \B^{O(\log(n/\eps_2))} \to
\B^{k^{1-\gamma_1}}$ which is a quantum-proof $(\frac{n}{2},k,\eps_2)$  strong
extractor, for $k>n^{\gamma_2}$.
\end{theorem}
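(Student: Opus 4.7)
The statement is attributed to DPVR09, so the plan is to reproduce the reconstructive proof that Trevisan's extractor remains secure against bounded quantum side information, and verify that the parameters can be tuned to give seed length $O(\log(n/\eps_2))$ and output length $k^{1-\gamma_1}$ when $k > n^{\gamma_2}$. I would take $E_2(x,y) = \bigl(\wh{C}(x)_{S_1(y)}, \dots, \wh{C}(x)_{S_m(y)}\bigr)$, where $\wh{C}:\B^{n/2}\to\B^{\bn}$ is a binary list-decodable code (e.g.\ Reed--Muller or Reed--Solomon concatenated with Hadamard) with list size $\poly(1/\eps_2)$ at agreement $\half+\eps_2/m$, and $\{S_1,\dots,S_m\}\subseteq[d]$ is a weak $(\ell,\rho)$-design with $\ell = O(\log(\bn/\eps_2))$.

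To obtain the claimed parameters, I would instantiate the weak-design construction of Raz--Reingold--Vadhan, which gives seed length $d = O(\ell^2/\log m)$. Since $m=k^{1-\gamma_1}$ and $k>n^{\gamma_2}$, we have $\log m = \Omega(\gamma_1\gamma_2\log n)$, and together with $\ell = O(\log(n/\eps_2))$ this yields $d = O(\log(n/\eps_2))$ as required (hiding the constants in $\gamma_1,\gamma_2$).

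The heart of the argument is the quantum reconstruction step. Assume for contradiction a distribution $X$ on $\B^{n/2}$ with $\minent(X;\rho)\ge k$ such that the output of $E_2$ is not $\eps_2$-close to uniform given the seed and $\rho(X)$. A standard hybrid argument over the $m$ output bits produces, for some index $i$, a quantum predictor that guesses $\wh{C}(X)_{S_i(Y)}$ from $Y$, the preceding $i-1$ output bits, and $\rho(X)$ with advantage $\eps_2/m$. Using the weak-design property, the preceding bits depend on $\wh{C}(x)$ only through at most $\sum_{j<i}|S_j\cap S_i|\le \rho m$ coordinates of the seed-indexed positions of $x$; treating those bits as classical advice and averaging over $Y$ yields, via the Adcock--Cleve-style quantum list-decoding argument, a quantum procedure that, with advice of length $O(\rho m + \log(m/\eps_2))$, outputs a short list containing $x$ from $\rho(x)$. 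Choosing $\rho = O(1)$ and verifying that the total advice plus quantum memory is $o(k)$ then contradicts $\minent(X;\rho)\ge k$.

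The main obstacle is the quantum list-decoding / reconstruction lemma itself: classically one inverts the code by a majority vote over many calls to the predictor, but here the predictor is a quantum circuit that acts on the state $\rho(x)$, which is damaged by measurement. The right way around this, as in DPVR09, is to observe that the predictor only needs to be called non-adaptively on disjoint parts of the seed, so by an averaging argument and a ``gentle measurement'' style analysis one can extract enough agreement bits of $\wh{C}(x)$ without having to reuse the state — this is the step I would have to implement most carefully to make the argument go through with only $O(\log(n/\eps_2))$ seed bits.
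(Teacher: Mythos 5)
The paper never proves Theorem~\ref{thm:Trevisan-log}: it is imported verbatim from~\cite{DPVR09} and used as a black box, so there is no ``paper's own proof'' to compare your reconstruction against. Your sketch has the right skeleton for the DPVR09 argument (Trevisan's extractor $=$ list-decodable code $+$ weak design, hybrid argument to a next-bit predictor, advice via the design's intersections, quantum reconstruction), but two of its steps are off if one actually tried to carry it out.

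First, the seed-length bookkeeping. You write $d = O(\ell^2/\log m)$ for the Raz--Reingold--Vadhan weak design; the RRV bound is $d = O(\ell^2/\log\rho)$ where $\rho$ is the design parameter, not $m$. To land at $d = O(\log(n/\eps_2))$ you must deliberately choose $\rho = \bn^{\Theta(\gamma_1\gamma_2)}$ so that $\log\rho = \Theta(\ell)$, and then check that the entropy loss $\rho m$ stays below $k$, which is exactly where the hypotheses $m = k^{1-\gamma_1}$ and $k > n^{\gamma_2}$ enter. With the parameter you wrote, the inequality $\ell^2/\log m \le O(\ell)$ forces $\log(n/\eps_2) = O(\log n)$, i.e.\ $\eps_2 = 1/\poly(n)$, which is more restrictive than what the theorem is claimed (and used) for; the $\rho$-based accounting does not have this gratuitous restriction, though it introduces its own constraint relating $\ell$, $\gamma_1$, and $\gamma_2$ that you would still need to verify.

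Second, the quantum reconstruction. DPVR09's actual point is \emph{not} that one can repeatedly query the predictor on disjoint seed fragments and sidestep measurement damage via a gentle-measurement argument. Their key observation is structural: after the hybrid step and after fixing the classical advice (the $\rho m$ intersection bits and the $i-1$ previous output bits), the predictor is applied to the quantum register \emph{exactly once}, on a single uniformly random restriction of the seed. The conclusion that this bounds $\minent(X;\rho)$ comes from a one-shot list-decoding/guessing lemma for the code (of the K\"{o}nig--Terhal type), not from reusing $\rho(x)$ across many calls. As written, the ``call the predictor non-adaptively on disjoint parts of the seed'' step is the one that would actually fail, since you do not have many copies of $\rho(x)$, and the fix is to avoid multiple calls altogether rather than to make them gentle.
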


Plugging these two constructions into
Theorem~\ref{thm:extracotr-composition} gives
Theorem~\ref{thm:main-qproof} which we now restate.
\newtheorem*{thmknow}{Theorem~\ref{thm:main-qproof}}
\begin{thmknow}
For any $\beta < \half, \gamma > 0$ and $\eps \ge
2^{-n^{(1-\gamma)/2}}$, there exists an explicit quantum-proof
$(n,(1-\beta)n,\eps)$ strong extractor
$E:\B^n \times \B^t \to \B^m$, with seed length $t=O(\log
n+\logeps)$ and output length $m=\Omega(n)$.
\end{thmknow}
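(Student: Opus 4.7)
The plan is to instantiate Theorem~\ref{thm:extracotr-composition} with the two Trevisan-based extractors from Theorems~\ref{thm:Trevisan-polylog} and~\ref{thm:Trevisan-log}. Set $b = \beta n$, so that the conditional-min-entropy assumption becomes $n - b = (1-\beta)n$; the hypothesis $\beta < \half$ translates to $b < n/2$, which is exactly what ensures that the residual entropy of $X_1$ (after cutting $X_2$) is still positive and that the sub-extractor $E_1$ has something to work on.

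First I would set the three error parameters appearing in Theorem~\ref{thm:extracotr-composition} (the bad-prefix parameter and the errors $\eps_1,\eps_2$ of the two building blocks) each to $\eps/3$, so that the total error is the target $\eps$. The composition theorem then requires $E_2$ to be a quantum-proof extractor on $\B^{n/2}$ for sources of min-entropy $k = \frac{n}{2} - b - \logeps - O(1) = (\half - \beta) n - \logeps - O(1)$. Whenever $\logeps \le (\half-\beta)n/2$, which is implied by the hypothesis on $\eps$, this $k$ is $\Omega(n)$ and in particular $k > (n/2)^{\gamma_2}$ for any $\gamma_2 \in (0,1)$, so Theorem~\ref{thm:Trevisan-log} supplies such an $E_2$ with seed length $d_2 = O(\log(n/\eps)) = O(\log n + \logeps)$ and output length $k^{1-\gamma_1}$ for a small $\gamma_1 > 0$ to be fixed below.

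For $E_1$ I would use Theorem~\ref{thm:Trevisan-polylog}, which gives a quantum-proof $(\frac{n}{2}, \frac{n}{2}-b, \eps/3)$ strong extractor with seed length $d_1 = O(\log^2(n/\eps))$ and output length $m_1 = (1-\delta)(\frac{n}{2}-b) = \Omega(n)$. The composition in Theorem~\ref{thm:extracotr-composition} requires the output of $E_2$ to be at least $d_1$ bits, i.e.\ $k^{1-\gamma_1} \ge c\,\log^2(n/\eps)$ for some absolute constant $c$. Substituting $k = \Theta(n)$ this reduces, up to constants, to $\logeps \le n^{(1-\gamma_1)/2}$, which is exactly the hypothesis $\eps \ge 2^{-n^{(1-\gamma)/2}}$ once I set $\gamma_1 = \gamma$. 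Truncating $E_2$'s output to exactly $d_1$ bits (this preserves its strong-extractor property) and plugging into Theorem~\ref{thm:extracotr-composition} then yields the desired quantum-proof $(n,(1-\beta)n,\eps)$ strong extractor with seed length $O(\log n + \logeps)$ and output length $\Omega(n)$.

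The only genuine obstacle here is a parameter-matching one: the short-seed extractor $E_2$ must output enough bits to feed the polylogarithmic seed of $E_1$, and this is precisely what forces the quantitative restriction $\eps \ge 2^{-n^{(1-\gamma)/2}}$ on the achievable error. Everything else is straightforward bookkeeping within the composition template of Theorem~\ref{thm:extracotr-composition}.
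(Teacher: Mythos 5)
Your proposal follows the same route the paper takes: set $b=\beta n$, instantiate the inner extractors $E_1,E_2$ via Theorems~\ref{thm:Trevisan-polylog} and~\ref{thm:Trevisan-log}, and plug them into Theorem~\ref{thm:extracotr-composition}, with the constraint ``$E_2$'s output must cover $E_1$'s seed'' driving the lower bound on $\eps$. The one place you are slightly off is the choice $\gamma_1=\gamma$. With $k=\Theta(n)$ and the hypothesis $\logeps\le n^{(1-\gamma)/2}$ taken near equality, the requirement $k^{1-\gamma_1}\ge c\log^2(n/\eps)$ becomes a comparison of two quantities that are both $\Theta(n^{1-\gamma})$, and it holds only if the hidden constants cooperate; they need not. ``Up to constants'' is therefore not quite enough slack here. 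The paper fixes this by taking $\gamma_1<\gamma$ strictly, which turns the gap into a polynomial factor $n^{\gamma-\gamma_1}$ and swallows every constant. With that one adjustment (take $\gamma_1<\gamma$, e.g.\ $\gamma_1=\gamma/2$) your argument is complete and coincides with the paper's proof; your splitting of the error budget into three $\eps/3$'s rather than the paper's slightly sloppier $\eps_1=\eps_2=\eps$ is fine and if anything a touch cleaner.
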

\begin{proof}
We set $\eps_1=\eps_2=\eps$,~ $b=\beta n$,~ $k=\frac{n}{2}-\beta
n-\logeps$,~ $\gamma_2=\delta=\half$ and $\gamma_1 < \gamma$. In
order to apply Theorem~\ref{thm:extracotr-composition} we need to
verify that the output length of $E_2$ is not shorter than the
seed length of $E_1$. This is indeed the case since
$$k^{1-\gamma_1} \ge (\frac{n}{2}-\beta n-n^{\frac{
1-\gamma}{2}})^{1-\gamma_1} \ge n^{{1-\gamma}} \ge
O(\log^2(\frac{n}{\eps})).$$ The output length of $E$ is $\half
(\half - \beta)n = \Omega(n)$.
\end{proof}

\section{The final extractor for the bounded storage model}
\label{sec:final}
We need the classical lossless condenser of \cite{GUV07}.
\begin{theorem}[\cite{GUV07}]\label{thm:GUV} For every $\alpha>0$ there exists an $(n, k) \to_\epsilon ((1+\alpha)k, k)$ strong  lossless condenser $C$ with seed length $O(\log n+\logeps)$.
\end{theorem}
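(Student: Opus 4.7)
The plan is to prove Theorem~\ref{thm:GUV} by combining the Parvaresh--Vardy code construction with the ``list-decoding'' view of condensers. First I would replace the distributional definition of lossless condenser by its combinatorial ``almost-injectivity'' form (one direction is Fact~\ref{fact:unique-neigh}, the converse is immediate): it suffices to exhibit a function $C:\B^n \times \B^d \to \B^{(1+\alpha)k}$ such that for every $X \subseteq \B^n$ with $\abs{X} \le 2^k$, the extended map $C'(x,y) = (y,C(x,y))$ is injective on a $(1-\eps)$-fraction of $X \times \B^d$. Equivalently, for every such $X$ the total number of ``collisions'' $\{(x,x',y) : x\ne x',\ C(x,y)=C(x',y)\}$ is at most $\eps \abs{X}\, 2^d$.

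Next I would give the algebraic construction. Fix a prime power $q$ and an irreducible $E(T) \in \mathbb{F}_q[T]$ of degree $n$, so $\mathbb{F}_q[T]/(E)$ is a field. View a source element $x \in \B^n$ as a polynomial $f_x(T)$ of degree less than $n$, pick an integer $h$, and define iteratively $f_x^{(0)} = f_x$ and $f_x^{(i+1)}(T) = f_x^{(i)}(T)^h \bmod E(T)$. Set
\begin{equation*}
C(x,y) \;=\; \bigl(f_x^{(0)}(y),\, f_x^{(1)}(y),\, \ldots,\, f_x^{(m-1)}(y)\bigr) \in \mathbb{F}_q^m,
\end{equation*}
with seed $y \in \mathbb{F}_q$ of bit length $d = \lceil \log q \rceil$ and output length $m \lceil \log q\rceil$. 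Choose $q$ so that $d = O(\log n + \logeps)$ and tune $m, h$ to match the output length $(1+\alpha)k$.

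The core step, and the place where most of the work lives, is the list-decoding contradiction. Suppose $C$ were not lossless on some $X$ with $\abs X = 2^k$; then by the reduction above there are more than $\eps \abs X\, 2^d$ collision pairs. Following the Parvaresh--Vardy argument, by a dimension count over the monomial space I would produce a nonzero polynomial $Q(T,Z_0,\ldots,Z_{m-1}) \in \mathbb{F}_q[T,Z_0,\ldots,Z_{m-1}]$ of degree less than $q$ in $T$ and small individual degree in each $Z_i$, vanishing at every evaluation point $(y, f_x^{(0)}(y),\ldots,f_x^{(m-1)}(y))$ for $x$ in the ``bad'' set. Substituting $Z_i \mapsto f_x(T)^{h^i}$ and reducing modulo $E(T)$ yields $R_x(T) \in \mathbb{F}_q[T]/(E)$ that vanishes on at least $q$ values; since $\deg_T Q < q$, this forces $R_x \equiv 0$. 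Thus every bad $f_x$ is a root of the univariate polynomial $\tilde Q(Z) = Q(T, Z, Z^h, \ldots, Z^{h^{m-1}})$ over the field $\mathbb{F}_q[T]/(E)$, whose root count is at most $\deg_Z \tilde Q = O(h^{m-1})$, bounding $\abs X$.

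The hard part is the simultaneous balancing of four parameters: $q$ must be large enough so both $\log q = O(\log n + \logeps)$ holds and the dimension count guarantees a nonzero $Q$; $m$ must be small enough that $m \log q \le (1+\alpha)k$; $h$ must be chosen so that $h^m \approx q$ (to make the monomial space large enough) yet $h$ exceeds $\deg_T Q$ (so $R_x \equiv 0$); and $\deg_Z \tilde Q = O(h^{m-1})$ must be below $\eps \cdot 2^k$ to yield a contradiction with $\abs{X}=2^k$. Achieving all four simultaneously, with $m = O(1/\alpha)$ and $h$ chosen as a power of $q^{1/m}$, is the nontrivial accounting carried out in the original GUV paper; once packaged, Theorem~\ref{thm:GUV} falls out.
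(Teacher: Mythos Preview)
The paper does not supply a proof of Theorem~\ref{thm:GUV}: the result is quoted from \cite{GUV07} and used as a black box in Section~\ref{sec:final}. There is therefore no proof in the paper to compare your proposal against.

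That said, your sketch is a faithful outline of the Guruswami--Umans--Vadhan argument itself (Parvaresh--Vardy evaluation map plus the interpolation and root-counting step over $\mathbb{F}_q[T]/(E)$), so it would supply a proof where the paper gives none. Two small points worth tightening if you flesh it out: first, with the source in $\B^n$ and working over $\mathbb{F}_q$, the natural degree for $f_x$ and for the irreducible $E$ is about $n/\log q$ rather than $n$; second, the interpolation step is cleaner if you directly assume the image $\Gamma(X)$ has size below $(1-\eps)\abs{X}\,2^d$ and find $Q$ vanishing on all of $\Gamma(X)$, rather than going through a collision count, since the dimension argument compares $\abs{\Gamma(X)}$ with $q\cdot h^m$. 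With those adjustments the parameter balancing you describe is exactly the GUV bookkeeping.
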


Plugging the condenser $C$ and the extractor $E$ of
Theorem~\ref{thm:main-qproof} into Theorem~\ref{thm:reduction}
gives Theorem~\ref{thm:main-storage}, which we now restate.
\newtheorem*{thmmainstorage}{Theorem~\ref{thm:main-storage}}
\begin{thmmainstorage}
For any $\beta < \half$ and $\eps \ge 2^{-k^\beta}$, there exists
an explicit $(n,k,\beta k,\eps)$ strong extractor against quantum
storage, $E:\B^n \times \B^t \to \B^m$,
with seed length $t=O(\log n+\logeps)$ and output length
$m=\Omega(k)$.
\end{thmmainstorage}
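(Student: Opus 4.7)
The plan is to apply Theorem~\ref{thm:reduction} (second item) with the GUV lossless condenser from Theorem~\ref{thm:GUV} on the outside and the high-entropy quantum-proof extractor from Theorem~\ref{thm:main-qproof} on the inside, and simply verify that all the parameter constraints can be simultaneously met for any $\beta<\half$.

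First I would fix an auxiliary constant $\alpha>0$ with $\alpha<1-2\beta$, which is strictly positive because $\beta<\half$. Apply Theorem~\ref{thm:GUV} to obtain a strong $(n,k)\to_{\eps_1}((1+\alpha)k,k)$ lossless condenser $C:\B^n\times\B^d\to\B^{n'}$ with $n'=(1+\alpha)k$, $d=O(\log n+\log\eps_1^{-1})$, and $\eps_1=\eps/4$. Then I would instantiate Theorem~\ref{thm:main-qproof} on input length $N:=d+n'=(1+\alpha)k+O(\log n+\logeps)$ with a small parameter $\beta'<\half$ and a small constant $\gamma>0$, to obtain an explicit quantum-proof $(N,(1-\beta')N,\eps_2)$ strong extractor $E':\B^N\times\B^t\to\B^m$ with $t=O(\log N+\log\eps_2^{-1})=O(\log n+\logeps)$ and $m=\Omega(N)=\Omega(k)$, setting $\eps_2=\eps/2$.

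To match the hypothesis of Theorem~\ref{thm:reduction} (second item), which requires $E'$ to be a $(d+n',d+k,d+b,\eps_2)$ strong extractor against quantum storage with $b=\beta k$, I need $d+b\leq \beta' N - (N-(d+k))$, i.e., $(1-\beta')N\leq d+k-b=d+(1-\beta)k$. Dividing by $N\approx(1+\alpha)k$, this amounts to $(1-\beta')\leq(1-\beta)/(1+\alpha)$ up to lower-order terms, i.e., $\beta'\geq(\alpha+\beta)/(1+\alpha)$. Since $\alpha<1-2\beta$, we have $(\alpha+\beta)/(1+\alpha)<\half$, so such a $\beta'<\half$ exists; every $N$-bit source of min-entropy $d+k$ with a $(d+b)$-storage encoding has conditional min-entropy at least $d+k-d-b=(1-\beta)k\geq(1-\beta')N$ by Proposition~\ref{prop:guessing-storage}, so $E'$ is in particular a $(d+n',d+k,d+b,\eps_2)$ strong extractor against quantum storage. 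I also need $\eps_2\geq 2^{-N^{(1-\gamma)/2}}$; since $N=\Theta(k)$ and $\eps\geq 2^{-k^\beta}$, choosing $\gamma<1-2\beta$ makes $N^{(1-\gamma)/2}\geq k^\beta$ for large enough $k$, so the constraint is satisfied.

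Applying Theorem~\ref{thm:reduction} now yields that the composed map $EC$ is an $(n,k,b=\beta k,\eps_2+2\eps_1)$ strong extractor against quantum storage; the total error is at most $\eps_2+2\eps_1\leq \eps/2+\eps/2=\eps$, the total seed length is $d+t=O(\log n+\logeps)$, and the output length is $m=\Omega(k)$, as required. The only real obstacle is the parameter juggling between $\alpha$, $\beta'$, and $\gamma$: each must be pushed below an explicit $\beta$-dependent threshold, but they are chosen independently and $\beta<\half$ gives positive slack in all three inequalities, so a feasible choice always exists.
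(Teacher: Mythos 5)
Your proposal is correct and follows the same route as the paper: compose the GUV lossless condenser (Theorem~\ref{thm:GUV}) with the high-entropy quantum-proof extractor of Theorem~\ref{thm:main-qproof} via Theorem~\ref{thm:reduction} (second item), and then verify the parameter constraints are jointly satisfiable because $\beta<\half$ leaves slack. One small algebra slip: simplifying $d+b\leq \beta'N-(N-(d+k))$ gives $(1-\beta')N\leq k-b=(1-\beta)k$, not $(1-\beta')N\leq d+(1-\beta)k$ (the $d$ cancels); since you only use the normalized form ``up to lower-order terms'' this does not affect the argument, but the corrected constraint is actually the slightly \emph{tighter} one, so it is worth stating it right.
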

\begin{proof}
Let $\zeta > 0$ be a constant to be fixed later. The extractor $E$
from Theorem~\ref{thm:main-qproof}, when the source length is
set to be $2(1-\beta)(1-\zeta)k$, is a quantum-proof
$\big(2(1-\beta)(1-\zeta)k,(1-\beta)k,\eps \big)$ strong extractor. In particular, it is a
$\big(2(1-\beta)(1-\zeta)k,k,\beta k,\eps\big)$ strong extractor
against quantum storage. Its output
length is $\Omega(k)$. The theorem follows by applying
Theorem~\ref{thm:reduction}, using the condenser of
Theorem~\ref{thm:GUV} with $\alpha = 2(1-\beta)(1-\zeta)-1.$ Since
$\beta < \half$ there is a way to fix $\zeta$ such that $\alpha >
0$.
\end{proof}

Since Theorem~\ref{thm:reduction} works in the more general model of flat distributions, and since the extractor from Theorem~\ref{thm:main-qproof}
already works in the most general setting, we get Theorem \ref{thm:main-flat}:

\newtheorem*{thmmainflat}{Theorem~\ref{thm:main-flat}}
\begin{thmmainflat}
For any $\beta < \half$ and $\eps \ge 2^{-k^\beta}$, there exists an explicit quantum-proof $(n,k,(1-\beta) k,\eps)$ strong extractor for flat distributions, $E:\B^n \times \B^t \to \B^m$, with seed length $t=O(\log n+\logeps)$ and output length $m=\Omega(k)$.
\end{thmmainflat}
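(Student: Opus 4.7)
The plan is to mirror the proof of Theorem~\ref{thm:main-storage}, but to invoke the first (flat-distributions) item of Theorem~\ref{thm:reduction} rather than the second (bounded-storage) item. The conceptual point is that the inner extractor of Theorem~\ref{thm:main-qproof} is a bona fide quantum-proof extractor (not merely one against quantum storage), so a fortiori it satisfies the flat-distribution guarantee demanded by the first item of Theorem~\ref{thm:reduction}.

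Concretely, fix $\beta < \half$ and $\eps \ge 2^{-k^\beta}$, and choose a constant $\zeta > 0$ small enough that $\alpha := 2(1-\beta)(1-\zeta) - 1 > 0$; such a $\zeta$ exists precisely because $\beta < \half$. Let $C : \B^n \times \B^d \to \B^{n'}$ be the lossless condenser of Theorem~\ref{thm:GUV} with parameters $(n,k) \to_\eps ((1+\alpha)k, k)$ and seed length $d = O(\log n + \logeps)$, so $n' = (1+\alpha)k$. Take $E$ to be the extractor of Theorem~\ref{thm:main-qproof} applied with source length $N = d + n'$; since $d = o(n')$ and the ratio $(1-\beta)k / N$ is bounded strictly above $\half$ by the choice of $\alpha$, the hypotheses of Theorem~\ref{thm:main-qproof} are met and $E$ exists with seed length $O(\log n + \logeps)$ and output length $\Omega(N) = \Omega(k)$.

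The bridging observation is that any quantum-proof $(N, k_2, \eps)$ strong extractor is trivially a quantum-proof $(N, f, k_2, \eps)$ strong extractor for flat distributions for every $f \ge k_2$: the flat-distribution definition merely imposes an additional restriction on the source and the same condition on the conditional min-entropy. Hence $E$ is in particular a quantum-proof $(d+n', d+k, (1-\beta)k, \eps)$ strong extractor for flat distributions. Applying the first item of Theorem~\ref{thm:reduction} with $\eps_1 = \eps_2 = \eps$ and $k_2 = (1-\beta)k$ yields that the composition $EC(x, (y_1, y_2)) = E((C(x, y_1), y_1), y_2)$ is a quantum-proof $(n, k, (1-\beta)k, 3\eps)$ strong extractor for flat distributions with total seed length $O(\log n + \logeps)$ and output length $\Omega(k)$; rescaling $\eps$ by a constant absorbs the factor of $3$.

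There is no genuine obstacle beyond verifying that $\alpha$ can be chosen positive while keeping the conditional min-entropy rate $(1-\beta)k/N$ above $\half$ so that Theorem~\ref{thm:main-qproof} applies. Both conditions reduce to the single inequality $\beta < \half$, which is exactly the hypothesis; the slack parameter $\zeta$ is introduced only to make the rate \emph{strictly} above $\half$, as required by Theorem~\ref{thm:main-qproof}.
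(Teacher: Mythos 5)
Your proposal is correct and matches the paper's argument: the paper likewise obtains Theorem~\ref{thm:main-flat} by running the same parameter choices as in the proof of Theorem~\ref{thm:main-storage} but invoking the flat-distribution item of Theorem~\ref{thm:reduction}, observing that the general quantum-proof extractor of Theorem~\ref{thm:main-qproof} is a fortiori a quantum-proof extractor for flat distributions. Your write-up is simply a more explicit rendering of the paper's one-sentence proof, and the bridging observation you make explicit (a quantum-proof $(n,k_2,\eps)$ strong extractor is a quantum-proof $(n,f,k_2,\eps)$ strong extractor for flat distributions for any $f$) is the same point the paper leaves implicit.
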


\noindent {\bf Acknowledgements.} We thank Roy Kasher for pointing
out an error in an earlier version of the paper. We thank
Christopher Portmann for helpful comments. We thank the anonymous
referees for many helpful suggestions that helped improve the paper.

\bibliographystyle{plain}
\bibliography{refs}
\end{document}